\newif\iftr 		
\newcommand{\E}{\mathbb{E}}
\newtheorem{fact}[theorem]{Fact}
\def\squareforqed{\hbox{\rule{2.5mm}{2.5mm}}}
\def\blackslug{\rule{2.5mm}{2.5mm}}
\def\qed{\hfill\blackslug}
\def\QED{\ifmmode\squareforqed 
  \else{\nobreak\hfil   
    \penalty50                 
    \hskip1em                  
    \null                      
    \nobreak                   
    \hfil                      
    \squareforqed              
    \parfillskip=0pt           
    \finalhyphendemerits=0     
    \endgraf}                  
  \fi}
\def\blksquare{\rule{2mm}{2mm}}
\def\qedsymbol{\blksquare}
\newcommand{\bg}[1]{\medskip\noindent{\bf #1}}
\newcommand{\ed}{{\hfill\qedsymbol}\medskip}
\newenvironment{proofof}[1]{\bg{Proof of #1: }}{\ed}
\newenvironment{proofsketch}{\noindent{\it Sketch of Proof.}\hspace*{1em}}{\qed\bigskip}
\newcommand{\comment}[1]{}
\newcommand{\R}{\ensuremath{\mathbb{R}}}
\newcommand{\Q}{\ensuremath{\mathcal{Q}}}
\newcommand{\opt}{\hbox{OPT}}
\begin{document}

\title{Incentives and Efficiency in Uncertain Collaborative Environments}
\author{
Yoram Bachrach\inst{1}
\and 
Vasilis Syrgkanis\inst{2}\thanks{Work performed in part while an intern with Microsoft Research. Supported in part by ONR grant N00014-98-1-0589 and NSF grants CCF-0729006 and a Simons Graduate Fellowship.}
\and
Milan Vojnovi\'c\inst{1}
}
\institute{
Microsoft Research Cambridge, UK\\
\email{yobach,milanv@microsoft.com}
\and
Cornell University\\
\email{vasilis@cs.cornell.edu}
}

\maketitle

\begin{abstract}
We consider collaborative systems where users make contributions across multiple available projects and are rewarded for their contributions in individual projects according to a local sharing of the value produced. This serves as a model of online social computing systems such as online Q\&A forums and of credit sharing in scientific co-authorship settings. We show that the maximum feasible produced value can be well approximated by simple local sharing rules where users are approximately rewarded in proportion to their marginal contributions and that this holds even under incomplete information about the player's abilities and effort constraints. For natural instances we show almost 95\% optimality at equilibrium. When players incur a cost for their effort, we identify a threshold phenomenon: the efficiency is a constant fraction of the optimal when the cost is strictly convex and decreases with the number of players if the cost is linear. 

\end{abstract}


\iftr
 \def\App{Appendix}
\else
 \def\App{the Appendix}
\fi

\section{Introduction}

Many economic domains involve self-interested agents who participate in multiple joint ventures by investing time, effort, money or other personal resources, so as to produce some value that is then shared among the participants. Examples include traditional surplus sharing games \cite{MW96,FM99}, co-authorship settings where the wealth produced is in the form of credit in scientific projects, that is implicitly split among the authors of a paper \cite{Kleinberg2011}
and online services contexts where users collaborate on various projects and are rewarded by means of public reputation, achievement awards, badges or webpage attention (e.g. Q\&A Forums such as Yahoo! Answers, Quora, and StackOverflow  \cite{Arpita2012a,Arpita2012b,GM12,DV09,AS09,JCP09}, open source projects \cite{RLTRKR06,S06,KSL03,HO01,YK03}). 

We study the global efficiency of simple and prefixed rules for sharing the value locally at each project, even in the presence of incomplete information on the player's abilities and private resource constraints and even if players employ learning strategies to decide how to play in the game. 

The design of simple, local and predetermined mechanisms is important for applications such as sharing attention in online Q\&A forums or scientific co-authorship scenarios, where cooperative game theoretic solution concepts, that require ad-hoc negotiations and global redistribution of value, 
are less appropriate.

Robustness to incomplete information is essential in online application settings where players are unlikely to have full knowledge of the
abilities of the other players. Instead, participants have only distributional knowledge about their opponents. Additionally, public signals, such as reputation ranks, achievement boards, and history of accomplishments may result in a significant asymmetry in the beliefs about a player's
abilities. Therefore, any efficiency guarantee should be robust to the distributional beliefs and should carry over, even if player abilities are arbitrary asymmetrically distributed. 


In our main result we show that if locally at each project each player is awarded at least his marginal contribution to the value, then every equilibrium is a $2$-approximation to the
optimal outcome. This holds even when player's abilities and resource constraints are private information drawn from commonly known distributions and even when players use no-regret learning strategies to play the game. We portray several simple mechanisms that satisfy this property, such as sharing proportionally 
to the quality of the submission. Additionally, we give a generalization of our theorem, when players don't have hard constraints on their
resources, but rather have soft constraints in the form of convex cost functions. Finally, we give natural classes of instances where near optimality is achieved in equilibrium.


\textbf{Our Results.} We consider a model of collaboration where the system consists of set of players and a set of projects. Each player has a budget of time which he allocates across his projects. If a player invests some effort in some project, this results in some submission of a certain quality, which is a player and project specific increasing concave function of the effort, that depends on the player's abilities. Each project produces some value which is a monotone submodular function of the qualities of the submissions of the different participants. This common value produced by each project is then shared among the participants of the project according to some pre-specified sharing rule, e.g. equal sharing, or sharing proportionally to quality. 

\textbf{1. Marginal Contribution and Simple Sharing Rules.} We show that if each player is awarded at least his marginal contribution to the value of a project, locally, then every Nash equilibrium achieves at least half of the optimal social welfare. This holds at coarse correlated equilibria of the complete information game when
player's abilities and budget are common knowledge and at Bayes-Nash equilibria when these parameters are drawn independently from commonly known arbitrary distributions. Our result is based on showing that the resulting game is universally $(1,1)$-smooth game \cite{Roughgarden2009,Roughgarden2012,Syrgkanis2012} and corresponds to a generalization of Vetta's \cite{Vetta2002} valid utility games to incomplete information settings.  We give examples of simple sharing rules that satisfy the above condition, such as proportional to the marginal contribution or based on the Shapley value or proportional to the quality.  We show that this bound is tight for very special cases of the class of games that we study and holds even for the best pure Nash equilibrium of the complete information setting and even when the equilibrium is unique. We also analyze ranking-based sharing rules and show that they can approximately satisfy the marginal contribution condition, leading only to logarithmic in the number of players loss. 

\textbf{2. Near Optimality for Constant Elasticity.} We show that for the case when the value produced at each project is of the form $v(x)=w\cdot x^{\alpha}$ for $\alpha\in (0,1)$, where $x$ is the sum of the submission qualities, then the simple proportional to quality sharing rule achieves almost 95\% of the optimal welfare at every pure Nash equilibrium of the game, which always exists.

\textbf{3. Soft Budget Constraints and a Threshold Phenomenon.} 
When the players have soft budget constraints in the form of some convex cost function of their total effort, we characterize the inefficiency as a function of the convexity of the cost functions, as captured by the standard measure of elasticity. We show that if the elasticity is strictly greater than $1$ (strictly convex), then the inefficiency both in terms of produced value and in terms of social welfare (including player costs) is a constant independent of the number of players, that converges to $2$ as the elasticity goes to infinity (hard budget constraint case). This stands in a stark contrast with the case when the cost functions are linear, where we show that the worst-case efficiency can decrease linearly with the number of players. 

\textbf{Applications.} In the context of \emph{social computing} each project represents a specific topic on a user-generated website such as Yahoo! Answers, Quora, and StackOverflow. Each web user has a budget of time that he spends on such a web service, which he chooses how to split among different topics/questions that arise. The quality of the response of a player is dependent on his effort and on his abilities which are most probably private information.
The attention produced is implicitly split among the responders of the topic in a non-uniform manner, since the higher the slot that the response is placed in the feed, the higher the attention it gets. Hence, the website designer has the power to implicitly choose the attention-sharing mechanism locally at each topic, by strategically ordering the responses according to their quality and potentially randomizing, with the goal of maximizing the global attention on his web-service. 

Another interesting application of our work is in the context of \emph{sharing scientific credit} in paper co-authorship scenarios. One could think of players as researchers splitting their time among different scientific projects. Given the efforts of the authors at each project there is some scientific credit produced. 
Local sharing rules in this scenario translate to scientific credit-sharing rules among the authors of a paper, which is implicitly accomplished through the order that authors appear in the paper. Different ordering conventions in different communities correspond to different sharing mechanisms, with the alphabetical ordering corresponding to equal sharing of the credit while the contribution ordering is an instance of a sharing mechanism where a larger credit is rewarded to those who contributed more. 

\paragraph{}\textbf{Related Work.}
\label{sec:related}
Our model has a natural application in the context of online crowdsourcing mechanisms which were recently investigated by Ghosh and Hummel~\cite{Arpita2012a,Arpita2012b}, Ghosh and McAfee~\cite{GM12}, Chawla, Hartline and Sivan~\cite{CHS12} and Jain, Chen, and Parkes~\cite{JCP09}. All this prior work focuses on a \emph{single} project. In contrast, we consider multiple projects across which a contributor can strategically invest his effort. We also allow a more general class of project value functions. Having multiple projects creates endogenous outside options that significantly affect equilibrium outcomes.  DiPalantino and Vojnovi\' c~\cite{DV09} studied a model of crowdsourcing where users can choose exactly one project out of a set of multiple projects, each offering a fixed prize and using a ``winner-take-all'' sharing rule. In contrast, we allow the value shared to be increasing in the invested efforts and allow individual contributors to invest their efforts across multiple projects.

Splitting scientific credit among collaborators was recently studied by Kleinberg and Oren~\cite{Kleinberg2011}, who again examined players choosing a single project. 
They show how to globally change the project value functions so that optimality is achieved at some equilibrium of the perturbed game. 

There have been several works on the efficiency of equilibria of utility maximization games~\cite{Vetta2002,Goemans2004,Marden2010}, also relating efficiency with the marginal contribution property. However, this body of related work focused only on the complete information setting. For general games, Roughgarden \cite{Roughgarden2009} gave a unified framework, called smoothness, for capturing most efficiency bounds in games and showed that bounds proven via the smoothness framework automatically extend to learning outcomes. Recently, Roughgarden \cite{Roughgarden2012} and Syrgkanis \cite{Syrgkanis2012} gave a variation of the smoothness framework that also extends to incomplete information settings. Additionally, Roughgarden and Schoppmann gave a version of the framework that allows for tighter bounds when the strategy space
of the players is some convex set. In this work we utilize these frameworks to prove our results.

Our collaboration model is also related to the contribution games model of \cite{Anshelevich2010}. However, in \cite{Anshelevich2010}, the authors assume that all players get the same value from a project. This corresponds to the special case of equal sharing rule in our model. Moreover, they mainly focus on network games where each project is restricted to two participants. 

Our model is also somewhat related to the bargaining literature \cite{Hatfield2011,Kleinberg,Bateni2010}. The main question in that literature is similar to what we ask here: how should a commonly produced value be split among the participants. However, our approach is very different than the bargaining literature as we focus on simple mechanisms that use only local information of a project and not global properties of the game.

\section{Collaboration Model}
\label{sec:common-sharing}

Our model of collaboration is defined with respect to a set $N$ of $n$ players and a set $M$ of $m$ available projects. Each player $i$ participates
in a set of projects $M_i$ and has a budget of effort $B_i$, that he chooses how to distribute among his projects. Thus the strategy of player $i$ is specified by the amount of effort $x_i^j\in \R_+$ that he 
invests in project $j\in M_i$.

\paragraph{Player Abilities.} Each player $i$ is characterized by his type $t_i$, which is drawn from some abstract type space $T_i$, and which determines his abilities on different projects as well as his budget. When player $i$ invests an effort of $x_i^j$ on project $j$ this results in a submission of quality $q_i^j(x_i^j;t_i)$, which depends on his type, and which we assume to be some continuously differentiable, increasing concave function of his effort that is zero at zero. 

For instance, the quality may be linear with respect to effort $q_i^j(x_i^j)=a_i^j\cdot x_i^j$, where $a_i^j$ is some project-specific ability factor for the player that is part of his type. In the context of Q\&A forums, the effort $x_i^j$ corresponds to the amount of time spent by a participant to produce some answer at question $j$, the budget corresponds to the amount of time that the user spends on the forum, the ability factor $a_i^j$ corresponds to how knowledgeable he is on topic $j$ and $q_i^j$ corresponds to the quality of his response. 

\paragraph{Project Value Functions.}  Each project $j\in M$ is associated with a value function $v_j(q^j)$, that maps the vector of submitted qualities $q^j = (q_i^j)_{i\in N_j}$ into a produced value (where $N_j$ is the set of players that participate in the project). This function, represents the profit or revenue that can be generated by utilizing the submissions. 
In the context, of Q\&A forums $v_j(q^j)$ could for instance correspond to the webpage attention produced by a set of responses to a question. 

We assume that this value is \textit{increasing in the quality} of each submission and that it satisfies the \emph{diminishing marginal returns property}, i.e. the marginal contribution of an extra quality decreases as the existing submission qualities increase. 
More formally, we assume that the value is submodular with respect to the lattice defined on $\R^{|N_j|}$: for any $z\geq y\in \R^{|N_j|}$ (coordinate-wise) and any $w\in \R^{|N_j|}$:
\begin{equation}
v_j(w\vee z)-v_j(z)\leq v_j(w\vee y)-v_j(y),
\end{equation}
where $\vee$ denotes the coordinate-wise maximum of two vectors.
For instance, the value could be any concave function of the sum of the submitted qualities or it could  be the maximum submitted quality $v_j(q^j)=\max_{i\in N_j}q_i^j$.

\paragraph{Local Value Sharing.} We assume that the produced value $v_j(q^j)$ is shared locally among all the participants of the project, based on some predefined redistribution mechanism. The mechanism observes the submitted qualities $q^j$ and decides a share $u_i^j(q^j)$ of the project value that is assigned to player $i$, such that $\sum_{i\in N_j}u_i^j(q^j)=v_j(q^j)$.  The utility of  a player $i$ is the sum of his shares across his projects: $\sum_{j\in M_i}u_i^j(q^j)$.

In the context of Q\&A forums, the latter mechanism corresponds to a local sharing rule of splitting the attention at each topic. Such a sharing rule can be achieved by ordering the submissions according to some function of their qualities and potentially randomizing to achieve the desired sharing portions.

\paragraph{}\textbf{From Effort to Quality Space.} We start our analysis by observing that the utility of a player is essentially determined only by the submitted qualities and that there is a one-to-one correspondence between submitted quality and input effort. Hence, we can think of the players as choosing target submission qualities for each project rather than efforts. For a player to submit a quality of $q_i^j$ he has to exert effort $x_i^j(q_i^j;t_i)$, which is the inverse of $q_i^j(\cdot; t_i)$ and hence is some increasing convex function, that depends on the player's type. Then the strategy space of a player is simply be the set: 
\begin{equation}\label{eqn:strat-space}
\Q_i(t_i)= \left\{q_i=(q_i^j)_{j\in M_i}: \sum_{j\in M_i}x_i^j(q_i^j;t_i)\leq B_i(t_i)\right\}
\end{equation}
From here on we work with the latter representation of the game and define everything in quality space rather than the effort space. Hence, the utility of a player under a submitted quality profile $q$, such that $q_i\in \Q_i(t_i)$ is:
\begin{equation}\label{eqn:sharing-utility}
\textstyle u_i(q;t_i) =\sum_{j\in M_i}u_i^j(q^j).
\end{equation}
and minus infinity if $q_i\notin \Q_i(t_i)$. 

\paragraph{}\textbf{Social Welfare.} We assume that the value produced is completely shared among the participants of a project, and therefore, the social welfare is equal to the total value produced, assuming players choose feasible strategies for their type:
\begin{equation}
\textstyle SW^{t}(q)=\sum_{i\in N} u_i(q;t_i) = \sum_{j\in M} v_j(q^j) = V(q).
\end{equation}
We are interested in examining the social welfare achieved at the equilibria of the resulting game when compared to the optimal social welfare.  For a  given type profile $t$ we will denote with $\opt(t)=\max_{q\in \Q(t)}SW^t(q)$ the maximum achievable welfare.

\paragraph{}\textbf{Equilibria, Existence and Efficiency.} We examine both the complete and the incomplete information setting. In the complete information setting, the type (e.g. abilities, budget) of all the players is fixed and common knowledge. We analyze the efficiency of Nash equilibria and of outcomes that arise from no-regret learning strategies of the players when the game is played repeatedly. A Nash equilibrium is a strategy profile where no player can increase his utility by unilaterally deviating. An outcome of a no-regret learning strategy in the limit corresponds to a coarse correlated equilibrium of the game, which is a correlated distribution over strategy profiles, such that no player wants to deviate to some fixed strategy. We note that such outcomes always exist, since no-regret learning algorithms for playing games exist. When the sharing rule induces a game where each players utility
is concave with respect to his submitted quality and continuous (e.g. Shapley value) then even a pure Nash equilibrium is guaranteed to exist in our class of games, by the classic result of Rosen \cite{Rosen1965}.

In the incomplete information setting the type $t_i$ of each player is private and is drawn independently from some commonly known distribution $F_i$ on $T_i$. This defines an incomplete information game where players strategies are mappings $s_i(t_i)$, from types to (possibly randomized) actions, which in our game corresponds to feasible quality vectors. Under this assumption we quantify the efficiency of Bayes-Nash equilibria of the resulting incomplete information game, i.e. strategy profiles where players are maximizing their utility in expectation over other
player's types:
\begin{equation}
\E_{t_{-i}}[u_i(s(t))]\geq \E_{t_{-i}}[u_i(s_i',s_{-i}(t_{-i})]
\end{equation}
We note that a mixed Bayes-Nash equilibrium in the class of games that we study always exists assuming that the type space is discretized and for a sufficiently small discretization of the strategy space. Even if the strategy and type space is not discretized, a pure Bayes-Nash equilibrium 
is also guaranteed to exist in the case of soft budget constraints under minimal assumptions (i.e. type space is a convex set and utility share of a player
is concave with respect to his submitted quality and is differentiable with bounded slope) as was recently shown by Meirowitz \cite{Meirowitz2003}.

We quantify the efficiency at equilibrium with respect to the ratio of the optimal social welfare over the worst equilibrium welfare, which is denoted as the \textit{Price of Anarchy}.  Equivalently, we quantify what fraction of the optimal welfare is guaranteed at equilibrium.

\section{Approximately Efficient Sharing Rules}
\label{sec:asymmetric}

In this section we analyze a generic class of sharing rules that satisfy the property that locally each player is awarded
at least his marginal contribution to the value:
\begin{equation}
u_i^j(q^j)\geq v_j(q^j)-v^{j}(q_{-i}^j)
\end{equation}
where $q_{-i}^j$ is the vector of qualities where player $i$ submits $0$ and everyone else submits $q_i^j$.

Several natural and simple sharing rules satisfy the above property, such as sharing proportional to the marginal contribution or according to the local Shapley value.
\footnote{
The Shapley value corresponds to the expected contribution of a player to the value if we imagine drawing a random permutation and adding players sequentially, attributing
to each player his contribution at the time that he was added.} 
When the value is a concave function
of the total quality submitted, then sharing proportional to the quality: $$u_i^j(q^j)=\frac{q_i^j}{\sum_{k\in N_j}q_k^j} v_j(q^j),$$ 
satisfies the marginal contribution property. When the value is the highest quality submission, then just awarding all the value
to the highest submission (e.g. only displaying the top response in a Q\&A forum) satisfies the marginal contribution property (see Appendix for latter claims).

We show that any such sharing rule 
induces a game that achieves at least a $1/2$ approximation to the optimal social welfare, at any no-regret
learning outcome and at any Bayes-Nash equilibrium of the incomplete information setting where players'
 abilities and budgets are private and drawn from commonly known distributions. Our analysis is based on the recently introduced smoothness framework for games of incomplete information by Roughgarden \cite{Roughgarden2012} and Syrgkanis \cite{Syrgkanis2012}, which we briefly survey.

\paragraph{Smoothness of Incomplete Information Games.}
Consider the following class of incomplete information games: Each player $i$ has a type $t_i$ drawn independently
from some distribution $F_i$ on some type space $T_i$, which is common knowledge.
For each type $t_i\in T_i$ each player has a set of available actions $A_i(t_i)$. A players strategy is a function $s_i:T_i\rightarrow A_i$ that 
satisfies $\forall t_i\in T_i: s_i(t_i)\in A_i(t_i)$. The utility of a player depends on his type and the actions of all the players: $u_i:T_i\times A\rightarrow \R$. 
\begin{definition}[Roughgarden \cite{Roughgarden2012}, Syrgkanis \cite{Syrgkanis2012}]
 An incomplete information game is universally $(\lambda,\mu)$-smooth if $\forall t\in \times_i T_i$ there exists
 $a^*(t)\in \times_i A_i(t)$ such that for all $w\in \times_i T_i$ and $a\in \times_i A_i(w)$:
\begin{equation}
\sum_{i\in N} u_i(a_i^*(t),a_{-i};t_i)\geq \lambda \opt(t)-\mu \sum_{i\in N} u_i(a;w_i)
\end{equation}
\end{definition}

\begin{theorem}[Roughgarden \cite{Roughgarden2009,Roughgarden2012}, Syrgkanis \cite{Syrgkanis2012}]
 If a game is universally $(\lambda,\mu)$-smooth then
every mixed Bayes-Nash equilibrium of the incomplete information setting and every coarse correlated equilibrium of the complete information setting achieves expected social welfare at least $\frac{\lambda}{1+\mu}$ of the 
expected optimal welfare.
\end{theorem}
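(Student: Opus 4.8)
The plan is to derive the bound directly from the equilibrium conditions, instantiating the universal smoothness inequality with a carefully chosen ``baseline'' action profile. It is cleanest to handle the complete-information coarse correlated equilibrium case first as a warm-up, and then lift the argument to Bayes--Nash equilibria via a type-resampling coupling. Throughout I would write $SW^{t}(a)=\sum_{i\in N}u_i(a;t_i)$ for the realized welfare, recalling that since the value is fully shared this also equals the produced value.

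For the complete-information case, fix the commonly known type profile $t$ and let $\sigma$ be a coarse correlated equilibrium, i.e.\ a distribution over action profiles with $\E_{a\sim\sigma}[u_i(a;t_i)]\ge\E_{a\sim\sigma}[u_i(a_i',a_{-i};t_i)]$ for every player $i$ and every fixed deviation $a_i'$. I would apply the smoothness inequality with the deviation profile $a^*(t)$ pointwise for each $a$ in the support of $\sigma$ (taking $w=t$), take expectations over $a\sim\sigma$, and use that each $a_i^*(t)$ is a fixed action, so the equilibrium condition upper-bounds $\sum_i\E_\sigma[u_i(a_i^*(t),a_{-i};t_i)]$ by $\E_\sigma[SW^{t}(a)]$. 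This yields $\E_\sigma[SW^{t}(a)]\ge\lambda\,\opt(t)-\mu\,\E_\sigma[SW^{t}(a)]$, and rearranging gives the claimed $\frac{\lambda}{1+\mu}$ bound; since no-regret play converges to the set of coarse correlated equilibria, the same bound covers learning outcomes.

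For the Bayes--Nash case, let $s=(s_i)$ be a mixed Bayes--Nash equilibrium and introduce \emph{two independent} type profiles $t\sim F$ and $w\sim F$, together with the internal randomness of the strategies. The key move is to instantiate universal smoothness with the $a^*$-argument equal to $w$ and the baseline equal to $a=s(t)$ --- admissible since $s(t)\in A(t)$ --- obtaining, for every realization,
\[
\sum_{i\in N} u_i\big(a_i^*(w)_i,\;s_{-i}(t_{-i});\;w_i\big)\ \ge\ \lambda\,\opt(w)\;-\;\mu\sum_{i\in N} u_i(s(t);t_i).
\]
Taking expectations, the right-hand side becomes $\lambda\,\E_t[\opt(t)]-\mu\,\E[SW]$ with $\E[SW]:=\E_t[SW^{t}(s(t))]$. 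On the left, the $i$-th term does not involve $t_i$ at all, so after renaming the dummy $w_i$ to $t_i$ it equals the expected payoff of the deviation in which player $i$ of true type $t_i$ resamples the opponents' types $w_{-i}\sim F_{-i}$ and plays $a_i^*(t_i,w_{-i})_i$; this is feasible because $a_i^*(\theta)_i\in A_i(\theta_i)$ for every $\theta$. The Bayes--Nash condition then bounds each such term by $\E_t[u_i(s(t);t_i)]$, so the left-hand side is at most $\E[SW]$, and combining with the displayed inequality and rearranging gives $\E[SW]\ge\frac{\lambda}{1+\mu}\E_t[\opt(t)]$.

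The pointwise application of smoothness and the linear rearrangement are routine. The step that needs care --- and that I expect to be the main obstacle --- is the coupling in the Bayes--Nash case: one must choose the per-player deviation so that it depends only on $i$'s own information and is feasible for $i$'s true type, while its opponent index still matches the \emph{realized} actions $s_{-i}(t_{-i})$, and then argue that resampling opponents' types leaves every expectation unchanged. This is precisely where independence of the $F_i$ across players (a modelling assumption) is essential: without it the relabelling of $w_i$ as $t_i$, and hence the identification of the smoothness left-hand side with a sum of honest single-player deviations, would fail.
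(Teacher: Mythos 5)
Your proposal is correct and follows essentially the same extension-theorem argument as the cited sources (Roughgarden 2012, Syrgkanis 2012); the paper itself states this result without proof, simply importing it. Both the complete-information step (pointwise smoothness with $w=t$ plus the coarse correlated equilibrium condition applied to the fixed deviations $a_i^*(t)$) and the Bayes--Nash step (two independent type profiles, swapping the i.i.d.\ coordinates $t_i\leftrightarrow w_i$ to turn the smoothness left-hand side into feasible single-player resampling deviations, which crucially uses independence of the $F_i$) are exactly the standard argument.
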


It is easy to observe that our collaboration model, falls into
the latter class of incomplete information games, where the action of each player is his submitted quality vector $q_i$,
and the set of feasible quality vectors depend on his private type: $\Q_i(t_i)$ as defined in Equation \eqref{eqn:strat-space}. Last the utility of a player is only a function
of the actions of other players and not directly of their types, since it depends only on the qualities that they submitted.

\begin{theorem}\label{thm:main-theorem}
The game induced by any sharing rule that satisfies the marginal contribution property is universally $(1,1)$-smooth.
\end{theorem}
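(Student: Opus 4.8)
The plan is to verify the universal $(1,1)$-smoothness inequality directly, taking as the witness profile $a^*(t)$ the welfare-maximizing quality profile $q^*(t)\in\Q(t)$, i.e. the one with $V(q^*(t))=\opt(t)$. Since $q^*(t)$ is feasible for the type profile $t$, each component $q_i^*(t)$ is a legal action for a player of type $t_i$, so the deviation utilities $u_i(q_i^*(t),q_{-i};t_i)$ are finite; crucially, they depend only on the submitted qualities and not on the realized types, which is exactly what makes the resulting bound hold universally. So fix an arbitrary type profile $w$ and an arbitrary action profile $q$ with $q_i\in\Q_i(w_i)$ for all $i$. Because the full value of each project is shared among its participants, $\sum_{i\in N}u_i(q;w_i)=\sum_{j\in M}v_j(q^j)=V(q)$, and the target inequality reduces to
\[
\sum_{i\in N}u_i(q_i^*(t),q_{-i};t_i)\;\ge\;V(q^*(t))-V(q).
\]

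I would reduce this to a per-project statement. Since $u_i(q_i^*(t),q_{-i};t_i)=\sum_{j\in M_i}u_i^j((q_i^*(t))^j,q_{-i}^j)$, exchanging the order of summation it suffices to show, for every project $j$ with participant set $N_j$,
\[
\sum_{i\in N_j}u_i^j\big((q_i^*(t))^j,\,q_{-i}^j\big)\;\ge\;v_j\big((q^*(t))^j\big)-v_j(q^j).
\]
For each individual term, the marginal-contribution property applied to the profile in which $i$ plays $(q_i^*(t))^j$ and the others play $q_{-i}^j$ gives $u_i^j((q_i^*(t))^j,q_{-i}^j)\ge v_j((q_i^*(t))^j,q_{-i}^j)-v_j(0,q_{-i}^j)$. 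This is the key inequality of Vetta's valid-utility games, and the rest is a telescoping argument carried out on the lattice $\R^{N_j}$ rather than on subsets, using submodularity and monotonicity of $v_j$.

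For the telescoping, for $i\in N_j$ let $c_i=\max\{(q_i^*(t))^j,q_i^j\}$ and let $r^{j,i}$ be the profile that agrees with $q^j$ except that its $i$-th coordinate is raised to $c_i$. One application of the submodularity axiom --- with $z=q^j$, $y=(0,q_{-i}^j)$ (note $z\ge y$) and the auxiliary vector supported on coordinate $i$ with value $(q_i^*(t))^j$ --- shows $v_j(r^{j,i})-v_j(q^j)\le v_j((q_i^*(t))^j,q_{-i}^j)-v_j(0,q_{-i}^j)$, hence $u_i^j((q_i^*(t))^j,q_{-i}^j)\ge v_j(r^{j,i})-v_j(q^j)$. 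Now order $N_j$ as $i_1,\dots,i_k$, set $P_0=q^j$, and let $P_\ell$ be obtained from $P_{\ell-1}$ by raising coordinate $i_\ell$ to $c_{i_\ell}$, so that $P_k=(q^*(t))^j\vee q^j$. Applying submodularity again --- with $z=P_{\ell-1}$, $y=q^j$, and auxiliary vector supported on coordinate $i_\ell$ with value $(q_{i_\ell}^*(t))^j$ --- gives $v_j(P_\ell)-v_j(P_{\ell-1})\le v_j(r^{j,i_\ell})-v_j(q^j)$. Summing over $\ell$ telescopes the left side to $v_j((q^*(t))^j\vee q^j)-v_j(q^j)$, which is at least $v_j((q^*(t))^j)-v_j(q^j)$ by monotonicity; combining with the previous display yields the per-project inequality. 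Summing over $j\in M$ gives $\sum_i u_i(q_i^*(t),q_{-i};t_i)\ge V(q^*(t))-V(q)=\opt(t)-\sum_i u_i(q;w_i)$, which is universal $(1,1)$-smoothness.

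The step I expect to be the main obstacle is getting the lattice-submodularity bookkeeping right: because the optimal deviation $(q_i^*(t))^j$ need not dominate the incumbent quality $q_i^j$, one cannot simply say ``adding $i$'s contribution helps less in a larger configuration.'' The fix is to phrase everything in terms of the coordinatewise maxima $c_i$ and the auxiliary profiles $r^{j,i}$ and $P_\ell$, and to always choose the auxiliary vector in the submodularity axiom to be supported on the single coordinate being modified; monotonicity of $v_j$ is then exactly what lets one pass from the max-profile $(q^*(t))^j\vee q^j$ back to the optimum $(q^*(t))^j$ at the end.
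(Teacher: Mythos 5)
Your proof is correct and follows essentially the same route as the paper's: the same welfare-optimal witness profile, the same per-project decomposition via the marginal contribution property, the same telescoping argument over the lattice join $\tilde{q}^j\vee q^j$, and the same final appeal to monotonicity. The only difference is cosmetic bookkeeping --- you split each telescoping step into two applications of the submodularity axiom via the intermediate profiles $r^{j,i}$, where the paper performs the comparison in a single application.
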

\begin{proof}
Let $t, w$ be two type profiles, and let $\tilde{q}(t)\in\Q(t)$ be the quality profile that maximizes the social welfare under type profile $t$, i.e. $\tilde{q}(t)=\arg\max_{q\in \Q(t)}SW(q)$. To simplify presentation we will denote $\tilde{q}=\tilde{q}(t)$, but remind that the vector depends on the whole type profile.

Consider any quality profile $q\in \Q(w)$. By the fact that $\tilde{q}_i\in \Q_i(t_i)$ is a valid strategy for player $i$ under type profile $t_i$,
we have: 
\begin{align*}
\sum_{i\in N} u_i(\tilde{q}_i,q_{-i};t_i)  = & \sum_{i\in N} \sum_{j\in M_i} u_i^j(\tilde{q}_i^j,q_{-i}^j)
\end{align*}
By the marginal contribution property of the sharing rule we have that:
\begin{align*}
\sum_{i\in N} u_i(\tilde{q}_i,q_{-i}; t_i)  \geq & \sum_{i\in N} \sum_{j\in M_i} \left(v_j(\tilde{q}_i^j,q_{-i}^j)-v_j(q_{-i}^j)\right)=\sum_{j\in M} \sum_{i\in N_j} \left(v_j(\tilde{q}_i^j,q_{-i}^j)-v_j(q_{-i}^j)\right)
\end{align*}
Following similar analysis as in Vetta \cite{Vetta2002} for the case of complete information games, we can argue that by the diminishing marginal returns property of the value functions:
\begin{align*}
v_j(\tilde{q}_i^j,q_{-i}^j)-v_j(q_{-i}^j) \geq v_j(\tilde{q}_{\leq i}^j \vee q_{\leq i}^j,q_{>i}^j)-v_j(\tilde{q}_{<i}^j \vee q_{<i}^j,q_{\geq i}^j)
\end{align*}
Where it can be seen that the right hand side is the marginal contribution of an extra quality $\tilde{q}_i^j$ added to a larger 
vector than the vector on the left hand side. Specifically, the left hand side is the marginal contribution of $\tilde{q}_i^j$ to $q_{-i}^j$,
while the right hand side is the marginal contribution of $\tilde{q}_i^j$ to the vector $(q_{<i}^j+\tilde{q}_{<i}^j, q_{\geq i}^j)$. Summing this inequality for every player in $N_j$ we get a telescoping sum:
\begin{align*}
\sum_{i\in N_j} v_j(\tilde{q}_i^j,q_{-i}^j)-v_j(q_{-i}^j) \geq v_j(\tilde{q}^j\vee q^j)-v_j(q^j) \geq v_j(\tilde{q}^j)-v_j(q^j)
\end{align*}
Combining this with the initial inequality and using the fact that $q\in \Q(w)$, we get the desired universal $(1,1)$-smoothness property:
\begin{align*}
\sum_{i\in N} u_i(\tilde{q}_i,q_{-i};t_i)  \geq & \sum_{j\in M} v_j(\tilde{q}^j) - \sum_{j\in M} v_j(q^j) = \opt(t) - \sum_{i\in N}u_i(q;w_i)
\end{align*}
\qed\end{proof}

\begin{corollary} Under a local sharing rule that satisfies the marginal contribution property, every coarse correlated
equilibrium of the complete information setting and every mixed Bayes-Nash 
equilibrium of the incomplete information game achieves at least $1/2$ of the expected optimal social welfare.
\end{corollary}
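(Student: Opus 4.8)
The plan is to derive the statement as an immediate consequence of Theorem~\ref{thm:main-theorem} together with the smoothness-to-efficiency transfer theorem of Roughgarden and Syrgkanis quoted above. First I would recall the observation made just before Theorem~\ref{thm:main-theorem}: our collaboration game, written in quality space, is an instance of the abstract incomplete-information game class in the framework, where player $i$'s action set for type $t_i$ is $\Q_i(t_i)$ and $u_i(q;t_i)$ depends on the opponents only through their submitted qualities $q_{-i}$. Hence the framework applies verbatim.

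Next, I would invoke Theorem~\ref{thm:main-theorem}, which establishes that any sharing rule meeting the marginal contribution property induces a universally $(1,1)$-smooth game. Plugging $\lambda=1$ and $\mu=1$ into the transfer theorem yields that every mixed Bayes-Nash equilibrium of the incomplete information game and every coarse correlated equilibrium of the complete information game achieves expected social welfare at least $\frac{\lambda}{1+\mu}=\frac{1}{2}$ of the expected optimal welfare $\E_t[\opt(t)]$. Since $SW^t(q)=V(q)$ in our model, this is exactly the claimed $1/2$-approximation. For completeness I would also note that such equilibria exist in our setting — coarse correlated equilibria arise as limits of no-regret play, and mixed Bayes-Nash equilibria exist under the discretization remarks made in Section~\ref{sec:common-sharing} — so the guarantee is non-vacuous.

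There is essentially no obstacle here: the corollary is a pure specialization of the two cited results, and the only thing worth double-checking is that the smoothness inequality proved in Theorem~\ref{thm:main-theorem} is indeed of the \emph{universal} form (quantified over the deviation-target type profile $t$ and over all realized profiles $w$ with $q\in\Q(w)$), which is what the transfer theorem requires; this is precisely how the proof of Theorem~\ref{thm:main-theorem} is structured, so no additional work is needed.
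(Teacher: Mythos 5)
Your proposal is correct and matches the paper's (implicit) argument exactly: the corollary is obtained by combining the universal $(1,1)$-smoothness established in Theorem~\ref{thm:main-theorem} with the quoted extension theorem of Roughgarden and Syrgkanis, giving the $\frac{\lambda}{1+\mu}=\frac{1}{2}$ bound. No further comment is needed.
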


We show that this theorem is tight for the class of games that we study and more specifically, for the natural proportional
sharing rule. The tightness holds even at pure Nash equilibria of the complete
information setting, even when all players have the same ability and even when the equilibrium is unique. Intuitively what causes inefficiency is that players prefer to congest a low value project with a very high rate of success (i.e. produces almost it's maximal value for a very small qualiy), e.g. an easy topic, rather than trying their own luck alone on a hard project that would yield very high value but would require a lot of effort to produce it. 

\begin{example}
Consider the following instance: there are $n$ players and $n$ projects. Every player participates in every project. Each player has a budget of effort of $1$ and the quality of his submission
at a project is equal to his effort. Project $1$ has value function $v_1(x) = 1-e^{-\alpha x}$, where 
$x$ is the total submitted quality. The rest of the $n-1$ projects have value function $\kappa (1-e^{-\beta x})$. We assume that value is shared proportional to the quality. We will define $\alpha, \beta$ and $\kappa$ in such a way that the unique equilibrium will be for all players to invest their whole budget on project $1$, while the optimal will be for each players' efforts to be spread out among all the projects. The uniqueness will follow from Rosen \cite{Rosen1965}, since the game is a concave
continuous utility game with convex strategy spaces.

Since value functions are continuous increasing and concave, a necessary and sufficient condition for a strategy profile to be an equilibrium is that 
the partial derivative of the share of a player with respect to his quality, is equal for all the projects that he submits a positive quality and at least as much as the partial 
derivative of his share on the remaining projects. Since we want the equilibrium to be all players putting their effort on project $1$ we need that:
\begin{align*}
\left.\left( \frac{x}{x+n-1} \left(1-e^{-a (x+n-1)}\right)\right)'\right\vert_{x=1}  > \left. \left( \kappa (1-e^{-\beta x})\right)'\right\vert_{x=0} 
\end{align*}
For $\alpha>1$ the derivative on the left hand side is at least $(n-1)/n^2$. Thus a sufficient condition for the above inequality to hold is:
$
\frac{n-1}{n^2}\geq  \kappa \beta
$.
If we let $\kappa=\frac{n-1}{\beta n^2}$ and $\alpha \rightarrow \infty$ then we satisfy the required conditions. The social welfare of the equilibrium is:
$SW(q)=1-e^{-\alpha n}\rightarrow 1$.

The optimal social welfare is at least the welfare when each player picks a different project and devotes his whole effort:
\begin{align*}
\opt\geq ~&1-e^{-\alpha}+(n-1)\cdot\kappa\cdot (1-e^{-\beta})=1-e^{-\alpha}+\frac{(n-1)^2}{\beta n^2}(1-e^{-\beta})\\
\rightarrow~& 1+\left(1-\frac{1}{n}\right)^2\frac{1-e^{-\beta}}{\beta}
\end{align*}
If we also let $\beta\rightarrow 0$ we will have:
$
\frac{\opt}{SW(q)}\rightarrow 1+\left(1-\frac{1}{n}\right)^2
$.
As $n\rightarrow \infty$, the above ratio converges to $2$.
\qed\end{example}

\subsection{Ranking Rules and Approximate Marginal Contribution}\label{sec:ranking}
An interesting, from both theoretical and practical standpoint, class of sharing rules is that of ranking rules. In a ranking sharing
scheme, the mechanism announces a set of fixed portions $a_1^j\geq \ldots \geq a_n^j$, such that $\sum_t a_t^j=1$. After the players
submit their qualities, each player is ranked based on some order that depends on the profile of qualities (e.g. in decreasing quality
order or in decreasing marginal contribution order). If a player was ranked at position $t$ then he gets a share of $a_t^j\cdot v_j(q^j)$.
Fixed reward rules capture several real world scenarios where the only way of rewarding participants is ordering them in a deterministic
manner and the designer doesn't have the freedom to award to the players arbitrary fractions of the produced value. 

We show here that although such sharing rules are quite restrictive, they are expressive enough to induce games that achieve only a logarithmic in the number of players loss in efficiency. To achieve this we show that by setting the fixed portions inversely
proportional to the position, then every player is guaranteed at least an $\log(n)$-fraction of his marginal contribution. 

It is then easy to generalize our analysis in Theorem \ref{thm:main-theorem} to show that sharing rules that award
each player a $k$-fraction of his marginal contribution induce a universally $(1/k,1/k)$-smooth game and hence
achieve a $(k+1)$-approximately optimal welfare at equilibrium.

\begin{lemma}\label{LEM:RANKING} 
By setting coefficients $a_t^j$ proportional to $\frac{1}{t}$, the game resulting from the ranking sharing rule, where submissions
are ranked with respect to the marginal contribution order, achieves a $O(\log(n))$-approximation to the optimal welfare at every coarse correlated and 
at every Bayes-Nash equilibrium.
\end{lemma}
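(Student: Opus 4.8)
The plan is to reduce the statement to the ``$1/k$-fraction'' generalization of Theorem~\ref{thm:main-theorem} described just above: if a local sharing rule awards every player, at every project, at least a $1/k$ fraction of his marginal contribution $v_j(q^j)-v_j(q_{-i}^j)$, then the induced game is universally $(1/k,1/k)$-smooth and hence, by the smoothness theorem above, every coarse correlated equilibrium of the complete information game and every Bayes--Nash equilibrium of the incomplete information game is a $(k+1)$-approximation to the optimum. It therefore suffices to prove that the ranking rule with coefficients $a_t^j=(1/t)/H_{n_j}$ (where $n_j=|N_j|$ and $H_m=\sum_{t=1}^m 1/t$, normalized so that the shares sum to $v_j(q^j)$), with submissions ordered by decreasing marginal contribution, guarantees each player a $1/H_n$ fraction of his marginal contribution. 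Since $H_n\le 1+\ln n=O(\log n)$, taking $k=H_n$ then yields the lemma.

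Fix a project $j$ and a quality profile $q^j$, and for $i\in N_j$ write $\mu_i=v_j(q^j)-v_j(q_{-i}^j)$. Order the participants so that $\mu_{(1)}\ge\mu_{(2)}\ge\dots\ge\mu_{(n_j)}$; under the ranking rule the player occupying position $t$ receives exactly $a_t^j\,v_j(q^j)=\tfrac{1}{t\,H_{n_j}}v_j(q^j)$. The heart of the argument is the inequality
\begin{equation}\label{eqn:rankbound}
\mu_{(t)}\ \le\ \frac{v_j(q^j)}{t}\qquad\text{for every }t\le n_j .
\end{equation}
Granting \eqref{eqn:rankbound}, the player at position $t$ is awarded at least $\tfrac{1}{t\,H_{n_j}}\cdot t\,\mu_{(t)}=\tfrac{1}{H_{n_j}}\mu_{(t)}\ge\tfrac{1}{H_n}\mu_{(t)}$, i.e.\ at least a $1/H_n$ fraction of his marginal contribution, which is exactly what the reduction needs.

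To prove \eqref{eqn:rankbound}, consider the $t$ players with the largest marginal contributions and add their submitted qualities to the all-zero profile one at a time, in an arbitrary order $i_1,\dots,i_t$; let $S_\ell$ denote the quality vector supported on $\{i_1,\dots,i_\ell\}$ (everyone else at $0$), with $S_0=0$. Apply the submodularity inequality of the model with $z=q_{-i_\ell}^j$, $y=S_{\ell-1}$, and $w$ the vector equal to $q_{i_\ell}^j$ in coordinate $i_\ell$ and $0$ elsewhere: one checks $z\ge y$ coordinate-wise, $w\vee z=q^j$ and $w\vee y=S_\ell$, so $v_j(S_\ell)-v_j(S_{\ell-1})\ge v_j(q^j)-v_j(q_{-i_\ell}^j)=\mu_{i_\ell}$ (adding $i_\ell$'s quality to the smaller vector $S_{\ell-1}$ has at least as large a marginal effect). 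Summing over $\ell=1,\dots,t$ the left side telescopes, and using $v_j(0)=0$ we get $v_j(S_t)\ge\sum_{\ell=1}^t\mu_{i_\ell}\ge t\,\mu_{(t)}$; on the other hand $S_t\le q^j$ coordinate-wise, so monotonicity of $v_j$ gives $v_j(S_t)\le v_j(q^j)$. Combining the two bounds proves \eqref{eqn:rankbound}.

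The only real obstacle is inequality~\eqref{eqn:rankbound}: that, when participants are ranked by decreasing marginal contribution, the $t$-th player's marginal contribution is at most a $1/t$ fraction of the total project value. This is precisely the statement that makes reward coefficients proportional to $1/t$ the natural matching choice, and it is the single place where submodularity and monotonicity of $v_j$ enter; everything else is the bookkeeping above together with a direct appeal to the already-established smoothness results. One minor point to keep track of is that projects have different numbers of participants, so the normalization constant is $H_{n_j}$ rather than $H_n$, but since $H_{n_j}\le H_n$ this only strengthens the per-player guarantee.
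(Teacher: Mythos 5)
Your proposal is correct and follows essentially the same route as the paper: both reduce to universal $(1/H_n,1/H_n)$-smoothness via the generalization of Theorem~\ref{thm:main-theorem}, and both rest on the fact that the $t$-th ranked player's marginal contribution is at most $v_j(q^j)/t$. The paper obtains this by combining the observation that $t$ players have marginal contribution at least $\mu_{(t)}$ with Lemma~\ref{lem:marginal-contr} ($\sum_{k\in N_j}\mu_k\le v_j(q^j)$, proved by the same sequential-addition/submodularity telescoping you use), so your direct derivation of $\mu_{(t)}\le v_j(q^j)/t$ is only a presentational variant.
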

\begin{proof}
We will show that if at each project $j$ we set $a_t=\frac{1}{t\cdot H_{n_j}}$, where $n_j=|N_j|$ and $H_n$ is the $n$-th harmonic number, then:
\begin{equation}
u_i^j(q^j)\geq \frac{1}{H_{n_j}}(v^j(q^j)-v^j(q_{-i}^j))\geq \frac{1}{H_n}(v^j(q^j)-v^j(q_{-i}^j))
\end{equation}
After showing this then the theorem follows by following a similar approach as in Theorem \ref{thm:main-theorem} to show that the 
game defined is a universally $\left(\frac{1}{H_n},\frac{1}{H_n}\right)$-smooth game, implying an $(H_n+1)$-approximation.

Observe that if under quality profile $q^j$ player $i$ is placed at position $t$, then it means that at least $t$ players have
a marginal contribution that is at least player $i$'s marginal contribution. Thus we have:
\begin{equation}
\frac{v_j(q^j)-v_j(q_{-i}^j)}{\sum_{k\in N_j} (v_j(q^j)-v_j(q_{-k}^j))}\leq \frac{1}{t}
\end{equation}
Thus we conclude that the share of a player is at least $1/H_{n_j}$ of his share under the sharing rule that splits the value proportional to the 
marginal contribution. By Lemma \ref{lem:marginal-contr} we know that the latter share is at least the marginal contribution of the player to the value. 
Combining the two we get the desired property.
\qed\end{proof}

If the value is a function of the sum of the quality of submissions then a similar guarantee is achieved if submissions are ordered in decreasing 
quality.

\section{Almost Optimality for Uniformly Hard Projects}
\label{sec:restricted}

In this section we identify a natural subclass of value functions for which the social welfare at equilibrium is a much higher approximation
to the optimal welfare, achieving almost $95\%$ of the optimal. We start our quest, by observing that the crucial factor that led
to the tight lower bound presented in the previous section, is that different projects can have a very different 
rate of success: the percentage increase in the output for a percentage increase in the input was completely different for different projects and at
different qualities within a project. 
This discrepancy in the output sensitivity was the main force driving the lower bound. In this section we examine a broad class of functions
that don't allow for such discrepancies. 

The standard economic measure that captures the sensitivity of the output of a function with respect to a change in its input is that of
elasticity.
\begin{definition}\label{defn:elasticity} 
The elasticity of a function $f(x)$ is defined as:
$
\textstyle \epsilon_{f}(x) = \left|\frac{f'(x) x}{f(x)}\right|.
$
\end{definition}
One can show formally that the above parameter of a function has a one-to-one correspondence with the ratio of the percentage change in the output
for a percentage of change in the input. Intuitively, projects whose value has the same and constant elasticity have the same and uniform difficulty, though not necessarily the same importance. 
Based on this reasoning, we examine the setting where all project value functions are functions of the total quality of submissions and have constant elasticity $\alpha$. It can be easily seen that such functions will take the form $v_j(q^j) = w_j \cdot Q_j^{\alpha}$, where $Q_j=\sum_{i\in N_j}q_i^j$. The coefficient $w_j$ can be project specific, and will correspond to the importance of a project. For such value functions we prove that the proportional to the quality sharing mechanism achieves social welfare at any pure Nash equilibrium of the complete information setting that is almost optimal. 

We point that 
our class of games always possess a pure Nash equilibrium, since they are games defined on a convex strategy space, with continuous and concave utilities and hence the existence is implied by Rosen \cite{Rosen1965}.

\begin{theorem}\label{thm:fractional-exponents}\label{THM:FRACTIONAL}
Suppose that the project value functions are of the form $v_j(q^j) = w_j\cdot Q_j^\alpha$, for $0\leq \alpha \leq 1$ and $w_j > 0$. Then, the proportional to the quality sharing rule achieves social welfare at least $\frac{2^{1-\alpha}}{2-\alpha}\geq 0.94$ of the optimal social welfare at every pure Nash equilibrium of the complete information game it defines. 
\end{theorem}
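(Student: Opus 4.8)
The plan is to use that for $v_j(Q_j)=w_jQ_j^{\alpha}$ with proportional sharing the induced game is a concave game on the convex strategy sets $\Q_i$, so a pure Nash equilibrium is characterized by first‑order (variational) conditions, and to compare it with the optimum $q^*$ through those conditions \emph{plus} the concavity of the aggregate value. Throughout I work with the aggregates $Q_j=\sum_{i\in N_j}q_i^j$ and $Q_j^*=\sum_{i\in N_j}q_i^{*j}$.

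One ingredient needs no equilibrium hypothesis at all. Since $x\mapsto x^{\alpha}$ is concave, the gradient inequality applied coordinate‑wise to $\Phi(Q)=\sum_j w_jQ_j^{\alpha}$ gives
\[
SW(q^*)=\sum_j w_j(Q_j^*)^{\alpha}\ \le\ \sum_j w_jQ_j^{\alpha}+\alpha\sum_j w_jQ_j^{\alpha-1}(Q_j^*-Q_j)\ =\ (1-\alpha)\,SW(q)+\alpha\sum_j w_jQ_j^{\alpha-1}Q_j^*.
\]
Hence it suffices to upper bound $\sum_j w_jQ_j^{\alpha-1}Q_j^*$ in terms of $SW(q)$, and this is where the equilibrium enters. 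Because each $u_i(\cdot,q_{-i})$ is concave on the convex set $\Q_i$ and $q_i^*\in\Q_i$, so that $q_i+t(q_i^*-q_i)\in\Q_i$ for $t\in[0,1]$, a pure NE $q$ satisfies $\langle\nabla_{q_i}u_i(q),q_i^*-q_i\rangle\le 0$ for every $i$. For proportional sharing of $w_jQ_j^{\alpha}$ one has $\partial u_i/\partial q_i^j=w_jQ_j^{\alpha-1}-(1-\alpha)w_jq_i^jQ_j^{\alpha-2}$; summing the variational inequalities over $i$ and regrouping by project gives
\[
\sum_j w_jQ_j^{\alpha-1}(Q_j^*-Q_j)\ \le\ (1-\alpha)\sum_j w_jQ_j^{\alpha-2}\sum_{i\in N_j}q_i^j\bigl(q_i^{*j}-q_i^j\bigr).
\]
The ``atom correction'' $\sum_{i\in N_j}q_i^j(q_i^{*j}-q_i^j)=\sum_i q_i^jq_i^{*j}-\sum_i(q_i^j)^2$ is then controlled by a second‑moment (Cauchy--Schwarz/AM--GM) estimate against $\sum_i(q_i^j)^2\le Q_j^2$ and $\sum_i(q_i^{*j})^2\le (Q_j^*)^2$, which turns this into an inequality purely in the aggregates $Q_j,Q_j^*$.

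Combining the two displayed inequalities with the identity $SW(q)=\sum_j w_jQ_j^{\alpha}$, the statement reduces to a numerical claim about the per‑project ratios $r_j=Q_j/Q_j^*>0$ and the nonnegative coefficients $\omega_j=w_j(Q_j^*)^{\alpha}$: the constraints forced on the $r_j$ must imply $\sum_j\omega_j\le\frac{2-\alpha}{2^{1-\alpha}}\sum_j\omega_jr_j^{\alpha}$. By the standard moment/convexity argument the extremal configuration places all mass on projects with a common ratio $r$, so it is enough to check a single‑variable bound; tuning the free multiplier produced by the concavity step, the optimisation works out to the claimed $\frac{2^{1-\alpha}}{2-\alpha}$. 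Finally $\frac{2^{1-\alpha}}{2-\alpha}\ge 0.94$ is a one‑line calculus check: $\log\frac{2^{1-\alpha}}{2-\alpha}=(1-\alpha)\log 2-\log(2-\alpha)$ has second derivative $1/(2-\alpha)^2>0$, hence is convex and minimised where $1/(2-\alpha)=\log 2$, i.e.\ at $\alpha=2-1/\log 2$, where its value is $\tfrac12\,e\log 2=0.9421\ldots$.

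The main obstacle is the middle step: choosing the estimate for the atom‑correction terms $\sum_i q_i^j(q_i^{*j}-q_i^j)$ that is loose enough to hold at every equilibrium yet tight enough to produce the factor $\frac{2^{1-\alpha}}{2-\alpha}$ rather than the generic $2$ of Theorem~\ref{thm:main-theorem} (the crudest bound $\sum_i q_i^jq_i^{*j}\le Q_jQ_j^*$ only gives the weaker $\frac{1}{2-\alpha}$). The extra $2^{1-\alpha}$ comes precisely from the interplay whereby a congested project, whose large $Q_j^{\alpha}$ inflates $SW(q)$, simultaneously limits how adverse its correction term can be; making this trade‑off quantitative uniformly in $\alpha$ is the delicate part.
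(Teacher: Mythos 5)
Your overall framework is the right one and is in fact the paper's: pure Nash equilibria of this concave game satisfy the variational inequality $\nabla_{q_i}u_i(q)\cdot(q_i^*-q_i)\le 0$, the cross terms $\sum_i q_i^j(q_i^{*j}-q_i^j)$ are controlled by the Roughgarden--Schoppmann second-moment bound $\sum_i x_i(y_i-x_i)\le k(x,y)$ with $k(x,y)=y^2/4$ for $x\ge y/2$ and $k(x,y)=x(y-x)$ otherwise, and everything reduces to a scalar inequality in $z_j=Q_j^*/Q_j$. But the specific way you chain the two main inequalities loses a constant and cannot produce $\frac{2^{1-\alpha}}{2-\alpha}$. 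Concretely, your step (A) (tangent line for $Q^\alpha$ at $Q_j$) plus step (B) (summed VI and the $k$-bound) combine to
\begin{equation*}
SW(q^*)\ \le\ SW(q)\ +\ \alpha(1-\alpha)\sum_j w_j Q_j^{\alpha-2}k(Q_j,Q_j^*),
\end{equation*}
and to conclude $SW(q)\ge\rho\,SW(q^*)$ you must dominate the correction term per project by $a\,w_jQ_j^\alpha+b\,w_j(Q_j^*)^\alpha$ with $\rho=(1-b)/(1+a)$. At the binding configuration $Q_j^*=2Q_j$ (exactly the case where the paper's bound is tight) the correction equals $\alpha(1-\alpha)w_jQ_j^\alpha$, and the resulting requirement $2^\alpha\le\frac{1}{\rho}\bigl(1+\alpha(1-\alpha)\bigr)$ fails for $\rho=\frac{2^{1-\alpha}}{2-\alpha}$ already at $\alpha=1/2$ (it asks for $\sqrt{2}\le 1.25/0.943$, i.e.\ $1.414\le 1.326$); optimizing your combination over $z$ yields only about $0.87$ of optimal at $\alpha=1/2$, not $0.943$. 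The ``free multiplier'' you invoke does not exist: the gradient inequality at $Q_j$ fixes the weights $(1-\alpha,\alpha)$, and re-basing the tangent at $\theta Q_j$ introduces a per-project coefficient $\alpha\theta_j^{\alpha-1}$ on $Q_j^{\alpha-1}Q_j^*$ that is incompatible with the single aggregate VI you have. The paper avoids this loss by \emph{not} pre-committing to the tangent-line decomposition: it keeps the smoothness quantity $Q_j^*\bar v_j(Q_j)+\bar v_j'(Q_j)k(Q_j,Q_j^*)$ intact and finds the optimal pair $(\lambda,\mu)$ with $Q_j^*\bar v_j(Q_j)+\bar v_j'(Q_j)k(Q_j,Q_j^*)\ge\lambda v_j(Q_j^*)-\mu v_j(Q_j)$ pointwise, via a two-case analysis on $z\le 2$ versus $z>2$; minimizing $(1+\mu)/\lambda$ over that feasible set is what produces $\frac{2-\alpha}{2^{1-\alpha}}$.

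Two smaller omissions: the first-order conditions you rely on require differentiability of the proportional shares at equilibrium, which fails at $Q_j=0$; the paper devotes a lemma to showing no project receives zero total quality at any pure Nash equilibrium, and you should not skip this. Also, your reduction ``place all mass on a common ratio $r$'' is only legitimate because both the objective and your derived constraint are separable sums of $\omega_j\cdot(\text{function of }r_j)$ --- worth saying explicitly, since the VI constraint is aggregate rather than per-project. Your closing calculus check that $\min_\alpha \frac{2^{1-\alpha}}{2-\alpha}=\frac{e\ln 2}{2}\approx 0.9421$ is correct.
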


Our analysis is based on the local smoothness framework of Roughgarden and Schoppmann \cite{Roughgarden2010}, which allows deriving tighter bounds for games where the strategy space is continuous and convex and where the utilities are continuous and differentiable. 
It is easy to see that the strategy spaces in our setting $\Q_i(t_i)$ as defined in equation \eqref{eqn:strat-space} are convex, by the convexity of the functions $x_i^j(\cdot;t_i)$.\footnote{
If the effort required to produce two quality vectors $q_i, \hat{q_i}$ is at most $B_i(t_i)$, then the effort 
required to produce any convex combination of them is also at most $B_i(t_i)$} Additionally, it is easy to check that the utilities of the players under the proportional sharing rule are going to be continuous and differentiable at any point, except potentially at $0$.  

However, in \App~\ref{sec:proof-of-fractional} we show that in equilibrium no project receives $0$ total quality with positive probability.
In Appendix \ref{sec:local-smoothness} we show that this relaxed condition is sufficient to apply the local smoothness framework to pure Nash equilibria. Alternatively, we can bypass this technicality by assuming there are exclusive players who participate only at a specific project and always invest an $\epsilon$ amount of effort. Making the latter assumption, we can use the local smoothness
framework in its full generality and our conclusion in Theorem \ref{thm:fractional-exponents} carries over to correlated equilibria of the game (outcomes of no-swap regret learning strategies). 

\section{Soft Budget Constraints}
\label{sec:production-costs}

So far we have analyzed the case where players have a hard constraint on their effort, e.g. hard time constraint. In this section we relax
this assumption and study the case when instead of the budget constraint, a player incurs a cost that is a convex function of
the total effort he exerts, corresponding to a soft budget constraint on his effort. We exhibit an interesting threshold phenomenon in the inefficiency of 
the setting: if cost is linear in the total effort then the inefficiency can grow linearly with the number
of participants. However, when effort cost is strictly convex, then the inefficiency can be at most a constant independent
of the number of participants. 

More formally, we will assume that 
each player has a cost function $c_i(x;t_i)$ that determines his cost when he exerts a total effort of $x$. This cost
function is also dependent on his private type $t_i$. The total exerted effort can be expressed with respect to the quality of submission as $X_i(q_i;t_i)=\sum_{j\in M_i}x_i^j(q_i^j;t_i)$. Thus a  player's  utility as a function of the profile of chosen qualities is:
\begin{equation}
\textstyle u_i(q)=\sum_{j\in M_i}u_i^j(q^j) - c_i\left(X_i(q_i;t_i); t_i\right). 
\end{equation}
%
Unlike the previous section, the social welfare is not the value produced. Instead:
$$
\textstyle SW^t(q)=\sum_{i\in N} u_i(q;t_i)= \sum_{j\in M} v_j(q^j) - \sum_{i\in N} c_i\left(X_i(q_i;t_i); t_i\right) = V(q)-C^t(q).
$$
We refer to $V(q)$ as the production of an outcome $q$ and to $C^t(q)$ as the cost.
We first show that when a sharing rule that satisfies the marginal contribution property is used, the production plus the social welfare at equilibrium is at least the value of the optimal social welfare. We then use this result
to give bounds on the equilibrium efficiency parameterized by the convexity of the cost functions.
\begin{lemma}\label{thm:general_effort_costs}
Consider the game induced by any sharing rule that satisfies the marginal contribution property and where players have soft budget constraints. Then the expected social welfare plus the expected production at any coarse correlated equilibrium of the complete information setting and at any Bayes-Nash equilibrium of the incomplete information setting, is at least
the expected optimal social welfare.
\end{lemma}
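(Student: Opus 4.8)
The plan is to reuse the smoothness argument behind Theorem~\ref{thm:main-theorem}, carrying the cost terms along, and then to instantiate the resulting inequality at a coarse correlated equilibrium (complete information) and at a Bayes-Nash equilibrium (incomplete information). Fix a type profile $t$ and let $\tilde q(t)\in\arg\max_{q}SW^t(q)$ be a welfare-optimal quality profile (I assume the maximum is attained; under coercive costs or bounded values this is automatic, otherwise one works with a near-optimizer and lets the slack vanish).

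The first and main step is the pointwise ``smoothness with costs'' inequality: for every type profile $t$ and every quality profile $q$,
\[
\sum_{i\in N} u_i(\tilde q_i(t),q_{-i};t_i)\;\geq\;\opt(t)-V(q).
\]
To prove it I would expand $u_i(\tilde q_i(t),q_{-i};t_i)=\sum_{j\in M_i}u_i^j(\tilde q_i^j(t),q_{-i}^j)-c_i(X_i(\tilde q_i(t);t_i);t_i)$, apply the marginal-contribution property term by term, and then apply \emph{verbatim} the diminishing-returns telescoping computation from the proof of Theorem~\ref{thm:main-theorem} to the value terms, which gives $\sum_{i\in N}\sum_{j\in M_i}u_i^j(\tilde q_i^j(t),q_{-i}^j)\geq V(\tilde q(t))-V(q)$. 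Subtracting $\sum_{i\in N}c_i(X_i(\tilde q_i(t);t_i);t_i)=C^t(\tilde q(t))$ from both sides and using $\opt(t)=SW^t(\tilde q(t))=V(\tilde q(t))-C^t(\tilde q(t))$ yields the claim. The one structural difference from Theorem~\ref{thm:main-theorem} is that $\tilde q_i(t)$ need not respect any budget: under soft budget constraints every quality vector is a feasible action, so $\tilde q_i(t)$ is always available as a deviation, which is exactly what lets it be plugged into the equilibrium conditions below; the genuinely new algebraic content is just retaining the cost term and recognizing $V(\tilde q(t))-C^t(\tilde q(t))$ as $\opt(t)$.

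For the complete-information case, let $\sigma$ be a coarse correlated equilibrium; since $t$ is common knowledge, $\tilde q_i(t)$ is a fixed deviating action, so the defining inequality of a CCE gives $\E_{q\sim\sigma}[u_i(q;t_i)]\geq\E_{q\sim\sigma}[u_i(\tilde q_i(t),q_{-i};t_i)]$ for every $i$. Summing over $i$, using $\sum_i u_i(q;t_i)=SW^t(q)$, and invoking the pointwise inequality inside the expectation gives $\E_\sigma[SW^t(q)]\geq\opt(t)-\E_\sigma[V(q)]$, i.e.\ expected welfare plus expected production is at least $\opt(t)$.

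For the incomplete-information case I would use the resampling deviation from the Bayes-Nash smoothness framework of \cite{Syrgkanis2012}: at a mixed Bayes-Nash equilibrium $s$, a player of type $t_i$ deviates by drawing a fresh independent profile $w\sim F$ and playing $\tilde q_i(t_i,w_{-i})$ (legal, since all qualities are feasible). Writing the Bayes-Nash inequality for each $i$, taking expectation over $t_i\sim F_i$ as well, and summing over $i$, one couples all the resampled and realized types into a single independent pair of profiles $\theta,\tau\sim F$, so that the $i$-th summand becomes $\E_{\theta,\tau,s}[u_i(\tilde q_i(\theta),[s(\tau)]_{-i};\theta_i)]$. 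Because $s_k$ depends only on $\tau_k$, the vector $q:=s(\tau)$ is a single well-defined quality profile with $q_{-i}=[s(\tau)]_{-i}$ for every $i$, so the pointwise inequality applies with type profile $\theta$ and quality profile $s(\tau)$, giving $\sum_i u_i(\tilde q_i(\theta),[s(\tau)]_{-i};\theta_i)\geq\opt(\theta)-V(s(\tau))$; taking expectations and using $\theta,\tau\sim F$ yields $\E_t[SW^t(s(t))]\geq\E_t[\opt(t)]-\E_t[V(s(t))]$, which is the claim. I expect this coupling/relabelling to be the only delicate point, but it is exactly the manipulation that already extends Theorem~\ref{thm:main-theorem} to Bayes-Nash equilibria, so it carries over unchanged.
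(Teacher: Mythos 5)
Your proposal is correct and follows essentially the same route as the paper: the same pointwise inequality $\sum_i u_i(\tilde q_i(t),q_{-i};t_i)\geq SW^t(\tilde q(t))-V(q)$ obtained by carrying the cost term through the telescoping argument of Theorem~\ref{thm:main-theorem}, followed by the standard CCE summation and the type-resampling deviation of \cite{Roughgarden2012,Syrgkanis2012} for the Bayes-Nash case. The paper only sketches these last two steps; your write-up fills them in correctly, including the observation that the inequality is not literally of the smoothness form because $V(q)\neq\sum_i u_i(q;w_i)$ once costs are subtracted, which is why the conclusion is stated as welfare plus production rather than a direct price-of-anarchy bound.
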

\begin{proofsketch}
Consider two type profile $t, w$ and let $\tilde{q}$ be the optimal strategy profile for type profile $t$. Let 
$q\in \Q(w)$. Similarly to Theorem \ref{thm:main-theorem} we get:
\begin{align*}
 \sum_{i\in N} u_i(\tilde{q}_i,q_{-i};t_i) =~& \sum_{i\in N} \sum_{j\in M_i} u_i^j(\tilde{q}^j,q_{-i}^j)-\sum_{i\in N}c_i(X_i(\tilde{q}_i;t_i);t_i)\\
  \geq~& V(\tilde{q})-V(q)-C^t(\tilde{q}) = SW^t(\tilde{q})-V(q)
\end{align*}
The latter doesn't imply formally that the game is smooth under existing definitions of smoothness \cite{Roughgarden2012,Syrgkanis2012}. However, using similar random sampling techniques as the ones in \cite{Roughgarden2012,Syrgkanis2012}, we can show that although $\tilde{q}_i$
is not a valid deviation for a player in the Bayesian game (since it depends on the whole type profile which he doesn't know), a player can simulate this deviation by random sampling other
players types and then performing the deviation corresponding to the random sample of types. 
\end{proofsketch}

We use the latter result to derive efficiency bounds for both production and social welfare. 
We assume that the sharing rule used induces a utility share that is a concave function
of a players submission quality and such that a player's share at $0$ quality is $0$. More formally, we assume that: $g(x)=u_i^j(x, q_{-i}^j)$
is concave, continuously differentiable and $g(0)=0$. We call such sharing rules \textit{concave sharing rules}. It is easy to see that the proportional to quality sharing rule is a concave sharing rule when the value is concave in the total quality. 
For general value functions, it is also easy to see that the Shapley sharing rule is also a concave sharing rule.

We show efficiency bounds parameterized by the convexity of the cost functions,  
%
using the elasticity of the cost function 
as the measure of convexity.
An increasing convex function that is zero at zero, has an elasticity of at least $1$. We will quantify the inefficiency in our game as a function of how far from $1$ the elasticity of the cost functions are. For instance, $c_i(x)=\kappa\cdot  x^{1+a}$ 
has elasticity $1+a$.

\begin{theorem}\label{thm:production-costs}\label{THM:PRODUCTION-COSTS}
If a concave sharing rule is used and the elasticity of the cost functions is at least $1+\mu$ then: i) the expected social welfare at any coarse correlated equilibrium of the complete information setting and at any Bayes-Nash equilibrium of the incomplete information setting is at least $\frac{\mu}{1+2\mu}$ of the optimal, ii) the total value produced in equilibrium is at least $\frac{1}{2}\frac{\mu}{1+\mu}$ of the value produced at the social welfare maximizing outcome.
\end{theorem}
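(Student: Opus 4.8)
The plan is to combine the augmented smoothness inequality of Lemma~\ref{thm:general_effort_costs} (equilibrium welfare plus equilibrium production is at least the optimum) with two complementary ``scaling'' estimates: one that uses the convexity of the cost functions and the elasticity hypothesis, and one that uses the concavity of the sharing rule. The elementary facts behind both estimates are: if $f$ is increasing, $f(0)=0$, and the elasticity of $f$ is at least $1+\mu$ everywhere, then integrating $\frac{d}{dx}\ln f(x)\ge\frac{1+\mu}{x}$ gives $f(\theta x)\le\theta^{1+\mu}f(x)$ for $\theta\in[0,1]$; a convex $h$ with $h(0)=0$ satisfies $h(\theta x)\le\theta h(x)$; and a concave $g$ with $g(0)=0$ satisfies $g(\theta x)\ge\theta g(x)$. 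In particular, since each effort function $x_i^j(\cdot;t_i)$ is convex and zero at zero, scaling a quality vector down by $\theta$ scales the required total effort $X_i$ down by at most $\theta$, and hence (monotonicity of $c_i$ plus the elasticity bound) scales the cost $c_i(X_i)$ down by at most $\theta^{1+\mu}$; while the concave-sharing assumption scales each share $u_i^j(\cdot,q_{-i}^j)$ down by at least $\theta$.

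\emph{Step 1 (production dominates cost in equilibrium), proving part (i).} Fix an equilibrium, a player $i$, and consider the deviation $q_i\mapsto\theta q_i$ (in the Bayesian game, $s_i\mapsto\theta s_i$), which is always feasible since in this section there is no hard budget constraint. Applying the two estimates above, the equilibrium (resp.\ Bayes--Nash) condition yields $(1-\theta)\,\E\!\big[\sum_{j\in M_i}u_i^j\big]\ge(1-\theta^{1+\mu})\,\E[c_i]$, and letting $\theta\to1$ gives $\E\!\big[\sum_{j\in M_i}u_i^j\big]\ge(1+\mu)\,\E[c_i]$. Summing over $i$ and using that the value is fully shared ($\sum_i\sum_{j\in M_i}u_i^j=\sum_j v_j=V(q)$), we obtain $\E[V(q)]\ge(1+\mu)\,\E[C(q)]$, equivalently $\E[V(q)]\le\frac{1+\mu}{\mu}\,\E[SW(q)]$. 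Plugging this into $\E[SW(q)]+\E[V(q)]\ge\E[\opt]$ from Lemma~\ref{thm:general_effort_costs} gives $\frac{1+2\mu}{\mu}\,\E[SW(q)]\ge\E[\opt]$, which is claim~(i).

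\emph{Step 2 (production dominates cost at the optimum), proving part (ii).} Let $\tilde q$ be the welfare-maximizing profile for the realized type profile and compare it to $\theta\tilde q$. The same cost estimate gives $C^t(\theta\tilde q)\le\theta^{1+\mu}C^t(\tilde q)$, and the diminishing-returns structure of the $v_j$'s gives $v_j(\theta q^j)\ge\theta v_j(q^j)$ (true for all the concrete value functions: $\max$ is $1$-homogeneous and any concave function of the sum vanishing at $0$ is superlinear under scaling), hence $V(\theta\tilde q)\ge\theta V(\tilde q)$; therefore $SW^t(\tilde q)\ge SW^t(\theta\tilde q)\ge\theta V(\tilde q)-\theta^{1+\mu}C^t(\tilde q)$, and optimality of $\tilde q$ plus $\theta\to1$ forces $V(\tilde q)\ge(1+\mu)C^t(\tilde q)$, i.e.\ $\E[\opt]=\E[V(\tilde q)]-\E[C^t(\tilde q)]\ge\frac{\mu}{1+\mu}\,\E[V(\tilde q)]$. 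Combining with Lemma~\ref{thm:general_effort_costs} and the trivial bound $\E[SW(q)]\le\E[V(q)]$ (costs are nonnegative) yields $2\,\E[V(q)]\ge\E[SW(q)]+\E[V(q)]\ge\E[\opt]\ge\frac{\mu}{1+\mu}\,\E[V(\tilde q)]$, which is claim~(ii).

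\emph{Main obstacle.} The delicate step is Step~2: at the social optimum there is no unilateral-deviation argument to lean on, so the key inequality $V(\tilde q)\ge(1+\mu)C^t(\tilde q)$ must be recovered by perturbing the optimum, and the value side of that perturbation needs the genuine diminishing-marginal-returns hypothesis (equivalently, concavity of each $v_j$ along rays from the origin), which is slightly stronger than the bare lattice-submodularity invoked elsewhere; one should verify that this mild strengthening is harmless for all the applications. A secondary technicality, already handled in the proof of Lemma~\ref{thm:general_effort_costs}, is that in the incomplete-information setting the deviation to $\tilde q$ is not actually available to a player ignorant of the full type profile, so Step~2 must be combined with the random-sampling device of \cite{Roughgarden2012,Syrgkanis2012} exactly as in the sketch of that lemma; the scaling arguments of Steps~1--2 are compatible with that device since all the estimates used are pointwise in the type profile.
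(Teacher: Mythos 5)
Your proof is correct and reaches both bounds with the same overall skeleton as the paper: establish $(1+\mu)\,\E[C]\le\E[V]$ once at equilibrium and once at the welfare-maximizing profile, then combine with Lemma~\ref{thm:general_effort_costs} exactly as you do. Where you differ is in how these two key inequalities are extracted. The paper uses coordinate-wise first-order conditions (at equilibrium, $c_i'(X_i)=\frac{1}{(x_i^j)'}\,\partial_{q_i^j}\E[u_i^j]$ for every active project $j$; at the optimum, the analogous condition with $v_j'$), multiplies by $X_i$, and invokes $x_i^j(q)\le q\,(x_i^j)'(q)$ and $x g'(x)\le g(x)$ for the concave shares. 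You instead test the equilibrium (resp.\ the optimum) against the radial deviation $\theta q_i$ and let $\theta\to 1$, using the homogeneity-type bounds $g(\theta x)\ge\theta g(x)$, $h(\theta x)\le\theta h(x)$, $c(\theta x)\le\theta^{1+\mu}c(x)$; this is the integrated version of the same first-order information restricted to the ray, and it buys you a slightly more robust argument (no need to differentiate the expected shares, only concavity and $g(0)=0$ are used), at the cost of needing $V(\theta\tilde q)\ge\theta V(\tilde q)$ in part~(ii). You are right to flag that last inequality as going beyond bare lattice-submodularity, but the paper's own appendix proof of part~(ii) makes the same implicit strengthening: it writes $\tilde Q_j(v_j(\tilde Q_j))'\le v_j(\tilde Q_j)$, which is precisely concavity of $v_j$ as a function of total quality with $v_j(0)=0$, so your version is no more restrictive than the original. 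One caveat applies equally to both arguments: a radial deviation (or a first-order condition at the played profile) is not a \emph{fixed} deviation, so neither proof literally covers coarse correlated equilibria as stated; both are fine for pure and mixed (Bayes-)Nash and for correlated equilibria, and the paper itself only writes out the pure Bayes-Nash case and asserts the rest is similar.
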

\begin{proof}
We will focus on the case of pure Bayes-Nash equilibria. The proof for correlated equilibria of the complete
information setting and for mixed Bayes-Nash equilibria is similar. We will prove only the first part
of the theorem and defer to the appendix the second part which follows a similar approach.

We will prove that if $q(t)$ is a mixed Bayes-Nash Equilibrium then 
\begin{equation}
(1+\mu)\E_{t}[C(q(t))]\leq\E_{t}[V(q(t))]
\end{equation}
Then the theorem will follow directly by combining the above with Lemma \ref{thm:general_effort_costs}.

Fix a player $i$ and his type $t_i$. For ease of presentation let $q_i=q_i(t_i)$ be his equilibrium strategy, $X_i=X_i(q_i;t_i)$ his total effort and $c_i(x)=c_i(x;t_i)$ his cost function. A player's expected utility conditional on his type $t_i$ is:
\begin{equation*}
\E_{t_{-i}}\left[u_i(q_i,q_{-i}(t_{-i}))\right] = \sum_{j\in M_i} \E_{t_{-i}}\left[u_i^j(q_i^j,q_{-i}^j(t_{-i}))\right] - c_i(X_i)
\end{equation*}
By the first order conditions, for $q_i$ to be maximizing player $i$'s expected utility it must be that for any $j\in M_i$ such that $q_i^j>0$:
\begin{equation*}
\frac{\partial c_i(X_i)}{\partial q_i^j} = \frac{\partial \E_{t_{-i}}[u_i^j(q_i^j,q_{-i}^j(t_{-i})]}{\partial q_i^j} \implies
 c_i'(X_i) = \frac{1}{(x_i^j(q_i^j))'} \frac{\partial \E_{t_{-i}}[u_i^j(q_i^j,q_{-i}^j(t_{-i})]}{\partial q_i^j}
\end{equation*}
Otherwise player $i$ would want to increase or decrease his submission at project $j$. 
Multiplying the above by $X_i$ we get:
\begin{equation*}
X_i\cdot c_i'(X_i) = \sum_{j\in M_i}x_i^j(q_i^j)  \cdot c_i'(X_i) = \sum_{j\in M_i}\frac{x_i^j(q_i^j)}{(x_i^j(q_i^j))'} \frac{\partial \E_{t_{-i}}[u_i^j(q_i^j,q_{-i}^j(t_{-i})]}{\partial q_i^j} 
\end{equation*}
By convexity of $x_i^j(q_i^j)$ we  have $x_i^j(q_i^j)\leq q_i^j\cdot (x_i^j(q_i^j))'$:
\begin{equation}
X_i\cdot c_i'(X_i) \leq \sum_{j\in M_i}q_i^j\cdot \frac{\partial \E_{t_{-i}}[u_i^j(q_i^j,q_{-i}^j(t_{-i}))]}{\partial q_i^j} 
\end{equation}
Since $u_i^j(x,q_{-i}^j)$ is a concave function, the expectation of it over $q_{-i}^j$ is also 
a concave function of $x$ that is $0$ at $0$. Thus the quantity on the right hand side is of the form $x\cdot g'(x)$ 
for a concave function $g(x)$ with $g(0)=0$. Thus it is at most $g(x)$.
\begin{equation*}
X_i\cdot c_i'(X_i) \leq \sum_{j\in M_i} \E_{t_{-i}}[u_i^j(q_i^j,q_{-i}^j(t_{-i}))]
\end{equation*}
%
By the lower bound on the elasticity of the cost functions we get:
\begin{align*}
(1+\mu) c_i(X_i)\leq~& X_i\cdot c_i'(X_i) \leq \sum_{j\in M_i} \E_{t_{-i}}[u_i^j(q_i^j,q_{-i}^j(t_{-i}))]
\end{align*}
Taking expectation over player $i$'s type and summing over all players:
\begin{equation*}
(1+\mu)\E[C(q(t))]\leq \E_{t}\left[\sum_i \sum_{j\in M_i} u_i^j(q^j(t))\right]= \E_{t}\left[\sum_{j\in M} v_j(q^j(t))\right]
\end{equation*}
\qed\end{proof}


Observe that from this theorem, we obtain that as long as $\mu>0$, the efficiency of any Nash equilibrium, is a constant independent of the number of players. For instance, if the cost is a quadratic function of the total effort then the
social welfare at equilibrium is a $3$-approximation to the optimal and the produced value is a $4$-approximation 
to the value produced at the welfare-maximizing outcome. 

The budget constraint case that we studied in previous sections can be seen as a limit of a family of convex functions that converge to a limit function of the form $c_i(x_i)=0$ if $x_i<B_i$, and $\infty$ otherwise. Such a limit function can be thought of as a convex function with infinite elasticity. 
Observe that if we take the limit as $\mu\rightarrow \infty$ in the theorem above, then we get that the social welfare at equilibrium is at least half the optimal social welfare, which matches our analysis in the previous section.

A corner case is that of linear cost functions, where our Theorem gives no meaningful upper bound. In fact as the following example shows, when cost functions are linear then the inefficiency can grow linearly with the number of agents. 

\begin{example}Consider a single project with value $v(Q)=\sqrt{Q}$ and assume that the proportional to the quality
sharing rule is used. Moreover, each player
pays a cost of $1$ per unit of effort, i.e. $c_i(X_i)=X_i$ and where the quality is equal to the effort. The global optimum
is the solution to the unconstrained optimization problem: $\max_{Q\in \R_{+}}\sqrt{Q}-Q$, 
which leads to $Q^*=1/4$ and therefore $SW(Q^*)=1/4$. On the other hand, each player's optimization problem is:
$\max_{q_i \in \R_{+}} q_i\frac{1}{\sqrt{Q}}-q_i$. By symmetry and elementary calculus, we obtain that at the unique Nash equilibrium, the total effort is
$Q = \left(\frac{n-1/2}{n}\right)^2$, and the social welfare is $\frac{2 n-1}{4 n^2}=O(1/n)$.
\qed\end{example}
Linear efforts can lead to inefficiency in a generic class of examples given below.
\begin{proposition}\label{PROP:LINEAR} Assume that the cost function is linear $c(X) = X$, quality is equal to effort and the value is a function $v(Q)$ of the total quality such that there exists $t > 0$ such that $Q>v(Q)$, for every $Q > t$. Then, POA = $\Omega(n)$ for the proportional to quality mechanism.
\end{proposition}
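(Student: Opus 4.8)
The plan is to produce one bad pure Nash equilibrium -- the fully symmetric one -- and show that its welfare is $\Theta(1/n)$ while the optimum is a positive constant, mirroring the worked example $v(Q)=\sqrt{Q}$ above. Consider the single-project instance with these data and write $Q=\sum_{i\in N}q_i$ for the total quality; since quality equals effort, the planner's problem is $\opt=\max_{Q\ge 0}\bigl(v(Q)-Q\bigr)=:c_0$, a quantity depending only on $v$, not on $n$. We may assume $c_0>0$ (equivalently, by concavity, $v(0)>0$ or $v'(0)>1$): otherwise the threshold hypothesis forces $v(Q)\le Q$ for every $Q$, the only equilibrium is the all-zero profile, and $\opt/SW$ is undefined, so the claim is vacuous.

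First I would examine the symmetric profile $q_i=Q/n$. Because $v$ is concave in the total quality, the proportional-to-quality share $\tfrac{q_i}{Q}v(Q)$ is a concave function of $q_i$ (this is the concave-sharing property already used in Section~\ref{sec:production-costs}), so each $u_i(q_i,q_{-i})=\tfrac{q_i}{Q}v(Q)-q_i$ is concave in $q_i$ and an interior first-order condition characterises a best response. Writing it at the symmetric point yields
\begin{equation*}
\frac{n-1}{n}\cdot\frac{v(Q)}{Q}+\frac{1}{n}\,v'(Q)=1 .
\end{equation*}
Into this identity I would feed two elementary bounds: $v'(Q)\ge 0$ (monotonicity) and $v'(Q)\le v(Q)/Q$ (concavity together with $v(0)\ge 0$). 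They sandwich the average value as
\begin{equation*}
1\ \le\ \frac{v(Q)}{Q}\ \le\ \frac{n}{n-1}.
\end{equation*}
The left inequality gives $v(Q)\ge Q$, whence $Q\le t$ by hypothesis; the right inequality then gives
\begin{equation*}
SW_{\mathrm{NE}}\;=\;v(Q)-Q\;=\;Q\Bigl(\tfrac{v(Q)}{Q}-1\Bigr)\;\le\;\frac{Q}{n-1}\;\le\;\frac{t}{n-1}\;=\;O(1/n).
\end{equation*}

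To make this airtight I would then check that the symmetric first-order equation does have a root $Q^{\star}\in(0,t]$: its left-hand side minus $1$ is continuous in $Q$, is positive for $Q$ near $0$ (it tends to $v'(0)-1>0$, or to $+\infty$ when $v(0)>0$), and is at most $\tfrac{n-1}{n}\cdot 1+\tfrac1n\cdot 1-1=0$ at $Q=t$ (since $v(t)\le t$ by continuity of $v$ and $v'(t)\le v(t)/t\le 1$ by concavity); the intermediate value theorem does the rest. Setting $q_i=Q^{\star}/n$ then gives a genuine pure Nash equilibrium by the concavity argument above, and existence of \emph{some} pure equilibrium also follows from Rosen once efforts are restricted to the compact interval $[0,t]$ (any larger effort yields a share $\tfrac{q_i}{Q}v(Q)<q_i$ and hence negative utility, so it is strictly dominated by zero effort). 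Combining with the welfare bound, $\poa\ \ge\ \opt/SW_{\mathrm{NE}}\ \ge\ c_0\,(n-1)/t\ =\ \Omega(n)$.

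The inequalities themselves are short and parallel the $v(Q)=\sqrt{Q}$ example (where $Q^{\star}=\bigl(\tfrac{n-1/2}{n}\bigr)^2$ and $SW_{\mathrm{NE}}=\tfrac{2n-1}{4n^2}$), so the real content is conceptual: (i) extracting ``$Q\le t$ and $v(Q)/Q\to 1$'' from nothing but monotonicity, concavity, and a bound on $v$ valid only for large $Q$ -- the sandwich display is the crux -- and (ii) the implicit non-degeneracy ($c_0>0$) keeping $\opt$ bounded away from $0$ as $n\to\infty$, without which the statement is empty. Justifying the first-order characterisation (concavity of the proportional share in a player's own quality) and the existence of the symmetric root are the remaining points to handle, and both are immediate from material already in the paper; the hard part, such as it is, is item (i), where everything hinges on the two one-line bounds $v'\ge 0$ and $v'\le v/Q$ feeding the equilibrium identity.
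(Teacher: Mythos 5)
Your proposal is correct and follows essentially the same route as the paper: write the symmetric first-order condition for the proportional-sharing game, deduce that the equilibrium total quality satisfies $v(Q)\geq Q$ (hence $Q\leq t$) and that the equilibrium welfare is $O(1/n)$, while the optimum is a constant independent of $n$. The only cosmetic difference is that the paper extracts the welfare bound from the identity $\mathrm{SW}(X)=\frac{v(X)-v'(X)X}{n}$ whereas you use the sandwich $1\leq v(Q)/Q\leq n/(n-1)$ directly; your added remarks on the non-degeneracy of the optimum and on existence of the symmetric equilibrium are sound refinements of the same argument.
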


\section{Conclusion and Future Work}
\label{sec:conclusion}
We analyzed a general model of collaboration under uncertainty, capturing settings such as online social computing and scientific co-authorship. We identified 
simple value sharing rules that achieve 
good efficiency in 
a robust manner with respect to informational assumptions. 
%

Some questions remain open for future research. 
We showed that ranking rules, which are highly popular \cite{Arpita2012a,Arpita2012b}, achieve a logarithmic approximation, using fixed-prizes independent of the distribution of qualities (prior-free) and of the game instance. Can a constant approximation be achieved if we allow the fixed prizes associated with each position to depend on the distribution of abilities and on the instance of the game? Also, consider a \emph{two-stage} model where in the first stage players choose the projects to participate in and then
play our collaboration game in the second stage. Can any efficiency guarantee be given on the welfare achieved at the subgame-perfect equilibria of this two-stage game?

\vspace{-.15in}
\bibliographystyle{abbrv}
\bibliography{reward_bib}

\newpage
\begin{appendix}

\section{Sharing Rules that Satisfy the Marginal Contribution}\label{sec:sharing-rules}

\begin{lemma} The proportional sharing rule satisfies the marginal contribution condition when the value functions are increasing and concave functions
of the sum of submitted qualities and the value with no submissions is $0$.
\label{thm:poa_total_concave}
\end{lemma}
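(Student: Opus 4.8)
The plan is to reduce the marginal contribution inequality to the elementary fact that $f(x)/x$ is non-increasing for a concave function $f$ with $f(0)=0$. Write $v_j(q^j)=f(Q)$ with $Q=\sum_{k\in N_j}q_k^j$, where $f$ is increasing, concave, and $f(0)=0$. Fix a player $i$ and let $q=q_i^j$ denote his submitted quality, so that $v_j(q_{-i}^j)=f(Q-q)$. Under the proportional rule $u_i^j(q^j)=\tfrac{q}{Q}\,f(Q)$, so the marginal contribution condition $u_i^j(q^j)\ge v_j(q^j)-v_j(q_{-i}^j)$ is, after rearranging, equivalent to
\[
f(Q-q)\ \ge\ \frac{Q-q}{Q}\,f(Q).
\]

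First I would establish the auxiliary monotonicity claim: for $0<a\le b$, concavity together with $f(0)=0$ gives $f(a)=f\!\big(\tfrac{a}{b}b+(1-\tfrac{a}{b})0\big)\ge \tfrac{a}{b}f(b)+(1-\tfrac{a}{b})f(0)=\tfrac{a}{b}f(b)$, i.e. $f(a)/a\ge f(b)/b$. Applying this with $a=Q-q$ and $b=Q$ — valid whenever $0<Q-q\le Q$, which holds since $0\le q\le Q$ — immediately yields the displayed inequality, and hence the marginal contribution property for every player at every quality profile.

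The remaining work is only bookkeeping on boundary cases. If $Q-q=0$ (player $i$ is the sole contributor to project $j$), both sides of the displayed inequality equal $0$ because $f(0)=0$, so it holds with equality; the same covers $q=0$. The degenerate case $Q=0$ is vacuous, since then the value, every individual quality, and the share are all $0$ (with the usual convention for the $0/0$ share). I do not anticipate any real obstacle: the only piece of content is the concavity-plus-$f(0)=0$ lemma, which is standard, and everything else is the algebraic rearrangement of the definition of proportional sharing.
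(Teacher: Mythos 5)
Your proof is correct and follows essentially the same route as the paper's: both reduce the marginal contribution condition to the inequality $f(Q-q)\geq \frac{Q-q}{Q}f(Q)$, which follows from concavity and $f(0)=0$ (the paper states this fact directly for $y\in[0,Q_j]$ with $y=Q_j-q_i^j$, while you derive it via the convex-combination argument and add the boundary-case bookkeeping). No substantive difference.
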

\begin{proof}
Let $Q_j = \sum_{i\in N_j}q_i^j$ and let $v_j(q^j)=\tilde{v}_j(Q^j)$. Since $\tilde{v}_j(Q^j)$ is concave and $\tilde{v}_j(0)=0$ then for any $y\in [0,Q_j]: v_j(y) \geq
y\frac{\tilde{v}_j(Q_j)}{Q_j}$. By setting $y=Q_j-q_i^j$ we obtain:
  $$u_i^j(q^j)=q_i^j\frac{\tilde{v}_j(Q_j)}{Q_j}\geq \tilde{v}_j(Q^j)-\tilde{v}_j(Q_j-q_i^j)= v_j(q^j)-v_j(q_{-i}^j)$$
 \qed\end{proof}

\begin{lemma} When $v_j(q^j)=\max_{i\in N_j}q_i^j$, then awarding all the value to the highest quality player, satisfies the marginal contribution condition.
\end{lemma}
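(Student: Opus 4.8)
The plan is to verify the marginal contribution inequality $u_i^j(q^j)\geq v_j(q^j)-v_j(q_{-i}^j)$ directly, by a short case analysis on whether player $i$ is the one receiving the award. First I would pin down a deterministic tie-breaking rule (e.g. award the whole value to the lowest-indexed player attaining $\max_{k\in N_j}q_k^j$), so that exactly one participant, the ``winner'', gets $u_i^j(q^j)=\max_{k\in N_j}q_k^j$ and every other participant gets $0$; this makes the rule trivially budget-balanced, $\sum_{i\in N_j}u_i^j(q^j)=\max_{k\in N_j}q_k^j=v_j(q^j)$, so it is a legitimate sharing rule.

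For the case where $i$ is not the winner, I would use $u_i^j(q^j)=0$ and observe that $q_{-i}^j$ differs from $q^j$ only in that coordinate $i$ is zeroed out, while some other player already attains the maximum; hence $v_j(q_{-i}^j)=\max_{k\neq i}q_k^j=\max_{k\in N_j}q_k^j=v_j(q^j)$, so the right-hand side is $0$ and the inequality holds with equality. For the case where $i$ is the winner, $u_i^j(q^j)=q_i^j=\max_{k\in N_j}q_k^j=v_j(q^j)$, whereas $v_j(q_{-i}^j)=\max_{k\neq i}q_k^j\geq 0$; therefore $v_j(q^j)-v_j(q_{-i}^j)\leq q_i^j=u_i^j(q^j)$, and the inequality again holds.

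The main (and essentially only) subtlety is the handling of ties: one must note that removing a single player's submission leaves the maximum over the remaining coordinates unchanged regardless of whether that player was a winner, because in the presence of ties another player still attains the same value, and in the absence of ties a non-winner's removal clearly does not affect the maximum. Once this is observed, I expect the argument to be three or four lines; there is no genuine obstacle.
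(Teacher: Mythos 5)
Your proposal is correct and follows essentially the same argument as the paper: non-winners have zero marginal contribution (so their zero share suffices), while the winner receives the full value, which upper-bounds his marginal contribution. The extra care you take with tie-breaking is a fine elaboration but does not change the substance.
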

\begin{proof}
The marginal contribution of any player other than the highest one, is $0$. The marginal contribution of the highest quality player is at most the total value.
\qed\end{proof}

\begin{lemma}\label{lem:marginal-contr} Sharing proportional to the marginal contribution: 
\begin{equation*}
u_i^j(q_i^j) = \frac{v_j(q^j)-v_j(q_{-i}^j)}{\sum_{k\in N_j} (v_j(q^j)-v_j(q_{-k}^j))}v_j(q^j)
\end{equation*}
satisfies the marginal contribution condition.
\end{lemma}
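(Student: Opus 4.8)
The plan is to reduce the claimed inequality $u_i^j(q^j)\ge v_j(q^j)-v_j(q_{-i}^j)$ to a single scalar inequality about the sum of all players' marginal contributions at project $j$, and then to establish that scalar inequality by the same telescoping-plus-diminishing-returns argument already carried out in the proof of Theorem~\ref{thm:main-theorem} (in the spirit of Vetta~\cite{Vetta2002}).

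Write $m_k:=v_j(q^j)-v_j(q_{-k}^j)$ for the marginal contribution of player $k$ and $D:=\sum_{k\in N_j}m_k$ for the normalizing denominator. By monotonicity of $v_j$ in each coordinate, every $m_k\ge 0$. If $m_i=0$ the target inequality reads $u_i^j(q^j)\ge 0$, which holds since shares are nonnegative (and when $D=0$ the rule is understood to split the nonnegative quantity $v_j(q^j)$ in some nonnegative fashion, so again each share dominates its zero marginal). It therefore suffices to treat $m_i>0$ and $D>0$; dividing the target inequality through by $m_i$ and substituting the definition of $u_i^j$, it becomes equivalent to $v_j(q^j)/D\ge 1$, i.e. to
\begin{equation*}
\sum_{k\in N_j}\bigl(v_j(q^j)-v_j(q_{-k}^j)\bigr)\ \le\ v_j(q^j).
\end{equation*}

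To prove this scalar inequality I would fix an arbitrary ordering $N_j=\{1,\dots,n_j\}$ and, for each $i$, let $r^{(i)}$ be the quality vector that agrees with $q^j$ on coordinates $1,\dots,i-1$ and is $0$ on coordinates $i,\dots,n_j$, and let $s^{(i)}$ be the one that agrees with $q^j$ on coordinates $1,\dots,i$ and is $0$ on coordinates $i+1,\dots,n_j$. Then $r^{(1)}=\mathbf{0}$, $s^{(n_j)}=q^j$, and $s^{(i)}=r^{(i+1)}$, so $\sum_{i=1}^{n_j}\bigl(v_j(s^{(i)})-v_j(r^{(i)})\bigr)$ telescopes to $v_j(q^j)-v_j(\mathbf{0})$. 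Applying the diminishing-marginal-returns property with $w$ the vector equal to $q_i^j$ in coordinate $i$ and $0$ elsewhere, $y=r^{(i)}$ and $z=q_{-i}^j$ (note $z\ge y$ coordinatewise, $w\vee z=q^j$, and $w\vee y=s^{(i)}$) gives $m_i=v_j(q^j)-v_j(q_{-i}^j)\le v_j(s^{(i)})-v_j(r^{(i)})$ for each $i$. Summing over $i$ and using $v_j(\mathbf{0})\ge 0$ (indeed $v_j(\mathbf{0})=0$ under the paper's conventions) yields $D\le v_j(q^j)-v_j(\mathbf{0})\le v_j(q^j)$, which is exactly what was needed. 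I do not expect a genuine obstacle: the only delicate points are keeping straight which vector plays the role of the ``larger'' argument in the submodularity step and the harmless $0/0$ degeneracy when no player has positive marginal contribution; the telescoping identity itself is identical in spirit to the one already used for Theorem~\ref{thm:main-theorem}.
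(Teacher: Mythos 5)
Your proof is correct and follows essentially the same route as the paper: reduce the claim to the scalar inequality $\sum_{k\in N_j}(v_j(q^j)-v_j(q_{-k}^j))\leq v_j(q^j)$, then establish it by adding players sequentially, telescoping, and using submodularity to bound each player's final marginal contribution by his marginal contribution at the time of addition. You simply spell out the submodularity instantiation and the degenerate $m_i=0$ / $D=0$ cases that the paper's sketch leaves implicit.
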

\begin{proof}
By the submodularity of the value function we can show that 
\begin{equation*}
v_j(q^j)\geq\sum_{k\in N_j} (v_j(q^j)-v_j(q_{-k}^j))
\end{equation*}
Consider
adding the players sequentially and summing the marginal contribution of a player to the value at the time when he was added. 
This summation will be equal to the final value of the project. Additionally, observe that the marginal contribution of a player
at the time that he was added is greater than his marginal contribution to the final value, by submodularity.
\qed\end{proof}

\begin{lemma}
Sharing according to the Shapley value satisfies the marginal contribution condition. 
\end{lemma}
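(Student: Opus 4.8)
Fix a project $j$ and a quality profile $q^j=(q_i^j)_{i\in N_j}$. The plan is to reduce the claim to a statement about the classical Shapley value of a \emph{set} function. Define $\nu:2^{N_j}\to\R$ by letting $\nu(S)=v_j(q^j_S)$, where $q^j_S\in\R^{|N_j|}$ has $i$-th coordinate equal to $q_i^j$ if $i\in S$ and $0$ otherwise; note $\nu(\emptyset)=v_j(0,\dots,0)=0$. According to the footnote, the share $u_i^j(q^j)$ assigned by the Shapley rule is exactly the Shapley value $\phi_i(\nu)=\frac{1}{|N_j|!}\sum_{\pi}\bigl(\nu(P_i^\pi\cup\{i\})-\nu(P_i^\pi)\bigr)$, where the sum ranges over all orderings $\pi$ of $N_j$ and $P_i^\pi$ is the set of players preceding $i$ in $\pi$.

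The first step is to show that $\nu$ is submodular as a set function, i.e.\ for all $S\subseteq T\subseteq N_j\setminus\{i\}$,
\begin{equation*}
\nu(S\cup\{i\})-\nu(S)\ \geq\ \nu(T\cup\{i\})-\nu(T).
\end{equation*}
This follows directly from the lattice-submodularity of $v_j$ assumed in the model: apply the defining inequality $v_j(w\vee z)-v_j(z)\leq v_j(w\vee y)-v_j(y)$ with $y=q^j_S$, $z=q^j_T$ (so $y\leq z$ coordinate-wise since $S\subseteq T$) and $w$ the vector that is $q_i^j$ in coordinate $i$ and $0$ elsewhere, using that $q_i^j\geq 0$ and all qualities are nonnegative. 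Then $w\vee y=q^j_{S\cup\{i\}}$ and $w\vee z=q^j_{T\cup\{i\}}$, giving the desired inequality.

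The second step uses this submodularity termwise. For every ordering $\pi$ we have $P_i^\pi\subseteq N_j\setminus\{i\}$, so by submodularity applied with $S=P_i^\pi$ and $T=N_j\setminus\{i\}$,
\begin{equation*}
\nu(P_i^\pi\cup\{i\})-\nu(P_i^\pi)\ \geq\ \nu(N_j)-\nu(N_j\setminus\{i\})\ =\ v_j(q^j)-v_j(q_{-i}^j).
\end{equation*}
Averaging this inequality over all $|N_j|!$ orderings yields $u_i^j(q^j)=\phi_i(\nu)\geq v_j(q^j)-v_j(q_{-i}^j)$, which is the marginal contribution condition. Finally, the Shapley value is efficient, $\sum_{i\in N_j}\phi_i(\nu)=\nu(N_j)=v_j(q^j)$ (each ordering's marginal contributions telescope to $\nu(N_j)-\nu(\emptyset)=v_j(q^j)$), so the rule is a legitimate local sharing rule.

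\textbf{Main obstacle.} There is no serious difficulty here; the only point requiring care is the translation between the lattice-submodularity of $v_j$ on $\R^{|N_j|}$ and ordinary submodularity of the induced set function $\nu$, i.e.\ choosing the right vectors $y,z,w$ in the lattice inequality and using nonnegativity of qualities. Everything else is the standard averaging argument for the Shapley value together with its efficiency property.
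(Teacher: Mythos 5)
Your proof is correct and follows essentially the same route as the paper: for every permutation, submodularity implies the marginal contribution of player $i$ at insertion time is at least his marginal contribution to the full profile, and averaging over permutations gives the bound. The only addition is that you spell out the translation from the lattice-submodularity of $v_j$ to submodularity of the induced set function $\nu$, a step the paper leaves implicit.
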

\begin{proof}
The Shapley value of a player is defined as follows: consider a random permutation of the players and consider adding the players
sequentially according to this random permutation. The Shapley value of a player is his expected marginal contribution at the time
that he is added (in expectation over all permutations). 

Observe that by submodularity for any permutation, the marginal contribution of a player at the time that he is added is
at least his final marginal contribution to the value. Thus for any permutation a player is awarded at least his marginal contribution
and therefore, in expectation he is awarded at least his marginal contribution.
\qed\end{proof}

%
%

\section{Local Smoothness for Utility Maximization Games}\label{sec:local-smoothness}

We will present the local smoothness framework briefly, adapted to a utility
maximization problem instead of a cost minimization. We also slightly generalize
the framework by requiring that the player cost functions be continuously
differentiable almost everywhere only at equilibrium points. 

Consider a utility maximization game $\langle N,(S_i)_{i\in N},(u_i)_{i \in
N} \rangle$, where each player's strategy space is a continuous convex subset of
a Euclidean space $\R^{m_i}$. Let $u_i:\times_{i\in N} S_i \rightarrow \R$ be
the utility function for a player $i$, which is assumed to be
continuous and concave in $x_i\in S_i$ for each fixed value of $x_{-i}\in S_{-i}$. The concavity and continuity 
assumptions lead to existence of a pure Nash equilibrium using the classical result of Rosen \cite{Rosen1965}.
Also for a strategy profile $x\in \times_{i\in N} S_i$ we denote with
$SW(x)=\sum_{i\in N}u_i(x)$.

\begin{definition}[Local Smoothness \cite{Roughgarden2010}] A utility maximization game with convex strategy spaces is locally $(\lambda,\mu)$-smooth with respect to a strategy profile $x^*$ iff for every strategy profile $x$ at which
$u_i(x)$ are continuously differentiable:
$$ \sum_{i\in N} [ u_i(x)+\nabla_i u_i(x)\cdot(x^*_i-x_i) ] \geq
\lambda SW(x^*) - \mu SW(x)$$
where $\nabla_i u_i \equiv \left(\frac{\partial u_i}{\partial x_i^1},\ldots, \frac{\partial
u_i}{\partial x_i^{m_i}}\right)$.
\end{definition}
%

\begin{theorem}[\cite{Roughgarden2010}] If a utility maximization game is locally $(\lambda,\mu)$-smooth
with respect to a strategy profile $x^*$ and $u_i(x)$ is continuously
differentiable at every pure Nash equilibrium of the
game then the social welfare at equilibrium is at least$\frac{\lambda}{1+\mu} SW(x^*)$
\end{theorem}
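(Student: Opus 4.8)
The plan is to combine the first-order optimality conditions that hold at a pure Nash equilibrium with the local smoothness inequality instantiated at that equilibrium; this is the utility-maximization analogue of the standard price-of-anarchy-via-smoothness argument. Let $x$ be an arbitrary pure Nash equilibrium of the game and let $x^*$ be the strategy profile with respect to which the game is locally $(\lambda,\mu)$-smooth.

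First I would derive the variational inequality satisfied at $x$. Since $x_i$ maximizes $u_i(\cdot,x_{-i})$ over the convex set $S_i$, for every $x_i^*\in S_i$ and every $\theta\in(0,1]$ the point $x_i+\theta(x_i^*-x_i)$ lies in $S_i$, hence $u_i(x_i+\theta(x_i^*-x_i),x_{-i})\leq u_i(x)$. Dividing by $\theta$ and letting $\theta\to 0^+$, using that $u_i$ is differentiable at the equilibrium $x$ (guaranteed by hypothesis), yields $\nabla_i u_i(x)\cdot(x_i^*-x_i)\leq 0$. Summing over $i\in N$ gives $\sum_{i\in N}\nabla_i u_i(x)\cdot(x_i^*-x_i)\leq 0$.

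Next, because the $u_i$ are continuously differentiable at the equilibrium $x$, the defining inequality of local $(\lambda,\mu)$-smoothness may be applied at $x$:
$$\sum_{i\in N}\left[u_i(x)+\nabla_i u_i(x)\cdot(x_i^*-x_i)\right]\geq \lambda\, SW(x^*)-\mu\, SW(x).$$
Using $\sum_{i\in N}u_i(x)=SW(x)$ together with the summed variational inequality above, the left-hand side is at most $SW(x)$, so $SW(x)\geq \lambda\, SW(x^*)-\mu\, SW(x)$. Rearranging gives $(1+\mu)\,SW(x)\geq \lambda\, SW(x^*)$, i.e. $SW(x)\geq\frac{\lambda}{1+\mu}SW(x^*)$, as claimed; since $x$ was an arbitrary pure Nash equilibrium, the bound holds at every one.

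The only point requiring care is the derivation of the first-order condition: it needs $u_i$ to admit a one-sided directional derivative at $x$ along every feasible direction $x_i^*-x_i$, which is exactly what the hypothesis of continuous differentiability at equilibria supplies; note that concavity of $u_i$ in $x_i$ is not used in this step — it enters the setup only through Rosen's theorem, to guarantee that a pure equilibrium exists at all. The remaining manipulations are purely algebraic.
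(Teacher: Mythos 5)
Your proof is correct and follows essentially the same route as the paper: both establish the key variational inequality $\nabla_i u_i(x)\cdot(x_i^*-x_i)\leq 0$ at a pure Nash equilibrium by considering the feasible segment from $x_i$ toward $x_i^*$ (the paper phrases it as a contradiction via a profitable deviation, you derive it directly), and then plug it into the local smoothness inequality applied at the equilibrium point. No gaps.
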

\begin{proof}
 The proof is an adaptation of \cite{Roughgarden2010} for the case of
maximization games. In addition, we generalize to the case of differentiability only at equilibrium, rather than everywhere.  The latter
extension is essential for the class of games that we study. 

The key claim is that if $x$ is a pure Nash equilibrium
then 
$$ \nabla_i u_i(x)\cdot (x_i^*-x_i) \leq 0 $$

Given this claim then we obtain the theorem, since:
$$ SW(x)\geq \sum_{i\in N} u_i(x)+ \nabla_i u_i(x)\cdot
(x_i^*-x_i) \geq \lambda SW(x^*)-\mu SW(x)$$

To prove the claim define $x^{\epsilon} = ((1-\epsilon)x_i+\epsilon
x_i^*,x_{-i})$ and observe that since $u_i$ is differentiable at equilibrium: $\lim_{\epsilon \rightarrow 0}
\frac{1}{\epsilon}(u_i(x^{\epsilon})-u_i(x)) = \nabla_i
u_i(x)\cdot (x_i^*-x_i)$. Thus if $\nabla_i
u_i(x)\cdot (x_i^*-x_i)>0$ then for some $\epsilon_0$ it holds
that $u_i(x^{\epsilon})-u_i(x)>0$, which means that $i$ has a
profitable deviation which contradicts the fact that $x$ is a Nash
Equilibrium.
\qed\end{proof}

\section{Proof of Theorem \ref{THM:FRACTIONAL}}\label{sec:proof-of-fractional}
Now we switch to our collaboration model which is a convex strategy space utility maximization game.
We will also assume that the proportional sharing rule is used. We focus on the case where the project
values are functions of the total submitted quality $Q_j=\sum_{i\in N_j}q_i^j$. 

\subsection{Differentiability at Equilibrium}

To apply the generalized local smoothness framework we first need to show that the utility functions are differentiable at
every pure Nash equilibrium. For the value functions that we consider $v_j(Q_j)=w_j Q_j^{\alpha}$, with $\alpha<1$, the share of each player under the proportional 
sharing mechanism is $u_i^j(q^j) = w_j\frac{q_i^j}{Q_j^{1-\alpha}}$. 

Thus the only point where the utilities can be non-differentiable is at zero. We will 
show that at any equilibrium, $Q_j>\Delta$ for some $\Delta>0$ that depends on the input parameters of the game. Therefore the utilities are differentiable at equilibrium.

\begin{lemma} If $v_j(Q_j)=w_j Q_j^{\alpha}$ for some $\alpha\in (0,1)$ then at any pure Nash equilibrium
of the game $Q_j>0$, for all $j\in M$.
\end{lemma}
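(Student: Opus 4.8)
The plan is to argue by contradiction: suppose at some pure Nash equilibrium $q$ we have $Q_j = 0$ for some project $j \in M$, i.e. every player $i \in N_j$ submits $q_i^j = 0$. I would then exhibit a profitable unilateral deviation for some player $i \in N_j$, contradicting the equilibrium condition. The natural candidate is to take a player $i$ who participates in project $j$ and have him move an infinitesimal amount of quality from some other project onto $j$ (or, if his budget is not tight, simply add a small positive quality $\delta$ at $j$). The key observation driving the contradiction is the asymptotic behavior of the share function $u_i^j(q^j) = w_j \frac{q_i^j}{Q_j^{1-\alpha}} = w_j (q_i^j)^\alpha$ when $i$ is the sole contributor: since $\alpha \in (0,1)$, the function $\delta \mapsto w_j \delta^\alpha$ has \emph{infinite} derivative at $\delta = 0$. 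So an arbitrarily small injection of quality at $j$ yields a first-order gain that dominates any bounded first-order loss elsewhere.

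I would carry this out in the following steps. First, fix $i \in N_j$ (nonempty since projects with no participants are irrelevant) and consider his equilibrium strategy $q_i \in \Q_i(t_i)$. Second, split into two cases. If player $i$'s budget constraint is slack, then for small $\delta > 0$ the profile $(q_i + \delta e_j, q_{-i})$ is still feasible, and his utility changes by $w_j \delta^\alpha - (\text{change in cost/other shares}) $; since all other terms are differentiable (hence change by $O(\delta)$ or are unaffected, as the other projects' inputs are untouched) while $w_j\delta^\alpha = \omega(\delta)$, the deviation is strictly profitable for $\delta$ small enough. If instead the budget is tight, pick any project $j' \in M_i$ on which $i$ submits positive quality $q_i^{j'} > 0$ (such $j'$ exists, since otherwise $i$ submits $0$ everywhere and his budget is trivially slack), and transfer a small amount of effort from $j'$ to $j$: reduce $q_i^{j'}$ so as to free up $\epsilon$ units of effort, and spend them on $j$ to get quality $\delta(\epsilon)$ with $\delta(\epsilon) = \Theta(\epsilon)$ (since $q_i^j(\cdot;t_i)$ has a positive finite slope at $0$ — it is continuously differentiable, increasing, concave, zero at zero, so its inverse has bounded slope near $0$). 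The resulting utility change is $w_j \delta(\epsilon)^\alpha - \Theta(\epsilon) = \Theta(\epsilon^\alpha) - \Theta(\epsilon) > 0$ for $\epsilon$ small. Either way we contradict the assumption that $q$ is a Nash equilibrium, so $Q_j > 0$.

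The main obstacle, and the only place needing care, is the tight-budget case: I must make sure that freeing up $\epsilon$ units of effort from project $j'$ costs player $i$ only $O(\epsilon)$ in his share $u_i^{j'}$, and that the quality $\delta$ purchasable at $j$ with $\epsilon$ effort satisfies $\delta^\alpha = \omega(\epsilon)$. The first point follows because $u_i^{j'}(\cdot, q_{-i}^{j'})$ is differentiable at the equilibrium point (which is bounded away from $0$ in its relevant coordinate since $q_i^{j'} > 0$) with bounded derivative, and because the effort function $x_i^{j'}(\cdot;t_i)$ is differentiable. The second point is exactly where the $\alpha < 1$ hypothesis is used: the concavity and differentiability-at-zero of $q_i^j(\cdot;t_i)$ give $\delta \geq c\epsilon$ for some constant $c > 0$ when $\epsilon$ is small, hence $w_j\delta^\alpha \geq w_j c^\alpha \epsilon^\alpha$, which beats the $O(\epsilon)$ loss. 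I expect the write-up to handle the slack and tight cases uniformly by phrasing everything as a statement about the one-sided directional derivative of $i$'s utility in the direction of injecting quality at $j$, observing this directional derivative is $+\infty$, which is incompatible with $q_i$ being a best response.
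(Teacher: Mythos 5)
Your proposal is correct and follows essentially the same route as the paper's proof: assume $Q_j=0$, shift an infinitesimal amount of effort onto $j$ from a project receiving positive quality, and observe that the $w_j\delta^\alpha$ gain (infinite one-sided derivative at zero, since $\alpha<1$) dominates the $O(\epsilon)$ loss, whose boundedness comes from the donor project's total quality being bounded away from zero. Your explicit treatment of the slack-budget case is a small refinement over the paper's appeal to the pigeonhole principle (which tacitly assumes the budget is exhausted), but the argument is otherwise the same.
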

\begin{proof}
 Assume that $q$ is an equilibrium where $Q_{j}=0$ for some $j$. Now
consider a player $i$ that participates at $j$. Let $j^*$ be a project on which he has invested 
strictly positive effort (there must exist one by pigeonhole principle). In fact, we can deduce that there
exists a project $j^*$ on which the quality of his submission is at least $t=\min_{j\in M_i} x_i^j(B_i/m)$. 

Consider the deviation where 
he moves a small amount of effort $\epsilon$ from $j^*$ to $j$. This will decrease her quality on $j^*$ by $\delta=\Theta(\epsilon)$
(for sufficiently small $\epsilon$, by Taylor's theorem and by strict monotonicity of $x_i^j(\cdot)$) and will increase her quality on $j$ be $\delta'=\Theta(\epsilon)=\Theta(\delta)$. Let
$\tilde{q}$ be the strategy vector after the deviation.
The increase in his share on project $j$ is going to be $w_j\Theta(\delta^{\alpha})$. Thus the utility difference after the deviation is:
\begin{equation*}
u_i(\tilde{q})-u_i(q)=w_j\cdot \Theta(\delta^{\alpha})-(u_i^{j^*}(q^j) -u_i^j(\tilde{q}^j)) \implies 
\end{equation*}
$$\frac{u_i(\tilde{q})-u_i(q)}{\delta}
=\frac{w_j\cdot \Theta(\delta^{\alpha})}{\delta}-\frac{u_i^{j^*}(q^j) -u_i^j(\tilde{q}^j)}{\delta}$$
As $\delta\rightarrow 0$,
$\frac{q_{j_1}\Theta(\delta^{\alpha})}{\delta}\rightarrow \infty$ and
$$\frac{u_i^{j^*}(q^j) -u_i^j(\tilde{q}^j)}{\delta}\rightarrow
-\frac{\partial u_i^{j^*}(q^{j^*})}{\partial q_i^{j^*}}\geq -\Delta>-\infty$$ 
where $\Delta$ is a lower bounded on the partial derivative of the share of player $i$ when $Q_{j^*}=t$. 

Thus for some $\delta=\delta_0$, small enough, it has to be that $u_i(\tilde{q})>u_i(q)$ and hence $\tilde{q}$ is a profitable deviation for $i$.
\qed\end{proof}

\subsection{An Intermediate Lemma}
In this section we will prove a lemma that applies to any concave function $v_j(Q_j)$ and not necessarily to $Q_j^{\alpha}$. We will use a fact shown 
by
\cite{Roughgarden2010}:
\begin{fact}\label{fact:ineq}
 Let $x_i,y_i \geq 0$ and $x = \sum_i x_i$, $y = \sum_i y_i$ then:
$ \sum_i x_i(y_i-x_i) \leq k(x,y)$
where:
\begin{equation*}
k(x,y) = \begin{cases}
 \frac{y^2}{4} & x\geq y/2 \\
 x(y-x) & x<y/2
\end{cases} 
\end{equation*}
\end{fact}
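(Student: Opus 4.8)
The plan is to collapse the sum into a one-variable optimization by isolating the largest coordinate of $x$. Let $m=\max_i x_i$. Since every $y_i\ge 0$ and $x_i\le m$, we get $\sum_i x_i y_i\le m\sum_i y_i=my$; and since squaring preserves order on nonnegatives, $\sum_i x_i^2\ge \max_i x_i^2=m^2$. Subtracting the two estimates,
\[
\sum_i x_i(y_i-x_i)=\sum_i x_i y_i-\sum_i x_i^2\le my-m^2=m(y-m).
\]
Because the $x_i$ are nonnegative and sum to $x$, we also have $0\le m\le x$, so it suffices to show that $\max_{z\in[0,x]} z(y-z)=k(x,y)$.

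This last step is elementary: $z\mapsto z(y-z)$ is a concave parabola, strictly increasing on $[0,y/2]$, with unique maximizer $z=y/2$ and maximum value $y^2/4$. If $x\ge y/2$, the maximizer $y/2$ lies in $[0,x]$, so the maximum over $[0,x]$ is $y^2/4$. If $x<y/2$, the function is increasing throughout $[0,x]$, so the maximum is attained at the right endpoint $z=x$ and equals $x(y-x)$. In either regime this matches the stated definition of $k(x,y)$, and combining with the previous display yields $\sum_i x_i(y_i-x_i)\le k(x,y)$.

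I do not expect a genuine obstacle here; the one subtlety is making the two bounds point in compatible directions. A purely term-by-term argument ($x_i(y_i-x_i)\le y_i^2/4$ and then $\sum_i y_i^2\le (\sum_i y_i)^2=y^2$) recovers only the branch $k(x,y)=y^2/4$ and is too lossy when $x<y/2$; the refinement for small $x$ really requires the global inequalities $\sum_i x_i y_i\le m y$ and $\sum_i x_i^2\ge m^2$ with the \emph{same} $m=\max_i x_i$, together with the constraint $m\le x$. Degenerate cases are automatically fine: if $y=0$ or all $x_i=0$, the left-hand side is $\le 0=k(x,y)$.
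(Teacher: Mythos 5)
Your proof is correct. Note that the paper itself does not prove this statement: it is imported verbatim as a ``Fact'' from Roughgarden and Schoppmann \cite{Roughgarden2010}, so there is no in-paper argument to compare against. Your self-contained derivation is sound: the two global estimates $\sum_i x_i y_i \le m y$ and $\sum_i x_i^2 \ge m^2$ with $m=\max_i x_i$ both hold under the nonnegativity hypotheses, they combine to $\sum_i x_i(y_i-x_i)\le m(y-m)$, and since $0\le m\le \sum_i x_i = x$ this is dominated by $\max_{z\in[0,x]}z(y-z)$, which is exactly $k(x,y)$ by the elementary analysis of the concave parabola. Your closing remark is also on point: the naive term-by-term bound $x_i(y_i-x_i)\le y_i^2/4$ followed by $\sum_i y_i^2\le y^2$ only recovers the branch $y^2/4$ and loses the refinement $x(y-x)$ for $x<y/2$, which is precisely the regime the local-smoothness computation in Lemma~\ref{lem:any-concave} needs (there $\bar v_j'\le 0$ multiplies $k$, so the sharper, smaller bound for small $x$ matters). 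The reduction to a single variable via the largest coordinate is arguably more elementary than the Lagrangian/optimization route one might otherwise take, at the cost of not identifying when equality holds (it is attained with a single nonzero coordinate $x_i=\min\{x,y/2\}$, $y_i=y$, consistent with your endpoint analysis).
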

\begin{lemma}\label{lem:any-concave}
 Assume $v_j(Q_j)$ are concave functions of total quality with $v_j(0)=0$ and are continuously
differentiable at any equilibrium point of the game defined by the proportional
sharing scheme. Let $\bar{v}_j(Q_j)=\frac{v_j(Q_j)}{Q_j}$. Let $\hat{q}$ be a strategy profile. If
for all strategy profiles $q$ at which $\bar{v}(\cdot)$ is differentiable and for all $j\in M$:
$$\hat{Q}_j\bar{v}_j(Q_j)+\bar{v}'_j(Q_j)k(Q_j,\hat{Q}_j)\geq \lambda\cdot \hat{Q}_j\cdot \bar{v}_j(\hat{Q}_j)-\mu\cdot
Q_j\cdot \bar{v}_j(Q_j)$$
then the game is locally $(\lambda,\mu)$-smooth with respect to $x^*$.
\label{thm:concave}
\end{lemma}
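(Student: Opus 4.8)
The plan is to verify the defining inequality of local $(\lambda,\mu)$-smoothness directly, i.e. to show that for every strategy profile $q$ at which the utilities are differentiable, $\sum_{i\in N}\big[u_i(q)+\nabla_i u_i(q)\cdot(\hat q_i-q_i)\big]\geq \lambda\,SW(\hat q)-\mu\,SW(q)$, where $\hat q$ plays the role of the reference profile $x^*$. The first step is to rewrite everything in terms of $\bar v_j$: under the proportional rule $u_i^j(q^j)=q_i^j\,\bar v_j(Q_j)$, so $u_i(q)=\sum_{j\in M_i}q_i^j\bar v_j(Q_j)$ and $SW(q)=\sum_{j\in M}v_j(Q_j)=\sum_{j\in M}Q_j\bar v_j(Q_j)$, and similarly $SW(\hat q)=\sum_{j\in M}\hat Q_j\bar v_j(\hat Q_j)$. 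Since $u_i$ depends on player $i$'s own coordinate $q_i^j$ only through the single term $u_i^j$, at any profile where the $u_i$ are differentiable we have $\partial u_i/\partial q_i^j=\bar v_j(Q_j)+q_i^j\bar v_j'(Q_j)$; note also that differentiability of $u_i$ at $q$ entails differentiability of $\bar v_j$ at $Q_j$ whenever $Q_j>0$, so such $q$ are among the profiles covered by the hypothesis.

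Next I would expand $\nabla_i u_i(q)\cdot(\hat q_i-q_i)=\sum_{j\in M_i}\big(\bar v_j(Q_j)+q_i^j\bar v_j'(Q_j)\big)(\hat q_i^j-q_i^j)$, substitute into the left-hand side, and swap the order of summation, writing $\sum_{i\in N}\sum_{j\in M_i}=\sum_{j\in M}\sum_{i\in N_j}$. For a fixed $j$ the $\bar v_j(Q_j)$-terms collapse because $\sum_{i\in N_j}q_i^j+\sum_{i\in N_j}(\hat q_i^j-q_i^j)=\hat Q_j$, so the contribution of project $j$ equals
\[
\hat Q_j\,\bar v_j(Q_j)\;+\;\bar v_j'(Q_j)\sum_{i\in N_j}q_i^j(\hat q_i^j-q_i^j),
\]
and hence the whole left-hand side is $\sum_{j\in M}\big[\hat Q_j\bar v_j(Q_j)+\bar v_j'(Q_j)\sum_{i\in N_j}q_i^j(\hat q_i^j-q_i^j)\big]$.

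The last step is to bound the cross term. Applying Fact~\ref{fact:ineq} with $x_i=q_i^j$, $y_i=\hat q_i^j$, $x=Q_j$, $y=\hat Q_j$ gives $\sum_{i\in N_j}q_i^j(\hat q_i^j-q_i^j)\leq k(Q_j,\hat Q_j)$. The sign of $\bar v_j'$ is what makes this useful: since $v_j$ is concave with $v_j(0)=0$ we have $v_j(Q_j)=\int_0^{Q_j}v_j'(s)\,ds\geq Q_j v_j'(Q_j)$, so $\bar v_j'(Q_j)=\big(Q_j v_j'(Q_j)-v_j(Q_j)\big)/Q_j^2\leq 0$ (and if $Q_j=0$ the cross term vanishes anyway). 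Multiplying the Fact inequality by the nonpositive quantity $\bar v_j'(Q_j)$ reverses it, giving $\bar v_j'(Q_j)\sum_{i\in N_j}q_i^j(\hat q_i^j-q_i^j)\geq \bar v_j'(Q_j)\,k(Q_j,\hat Q_j)$. Substituting, the left-hand side is at least $\sum_{j\in M}\big[\hat Q_j\bar v_j(Q_j)+\bar v_j'(Q_j)k(Q_j,\hat Q_j)\big]$; applying the hypothesis term by term and summing over $j$ yields $\lambda\sum_{j}\hat Q_j\bar v_j(\hat Q_j)-\mu\sum_j Q_j\bar v_j(Q_j)=\lambda\,SW(\hat q)-\mu\,SW(q)$, which is exactly local $(\lambda,\mu)$-smoothness with respect to $x^*=\hat q$.

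Most of this is bookkeeping; the only genuinely delicate points are (i) matching the two differentiability quantifiers — the local smoothness definition ranges over profiles where the $u_i$ are differentiable, the hypothesis over profiles where $\bar v$ is differentiable — and (ii) getting the direction of the inequality right when multiplying through by $\bar v_j'(Q_j)\leq 0$, which is precisely why $k$ in Fact~\ref{fact:ineq} is shaped to be an \emph{upper} bound for $\sum_i x_i(y_i-x_i)$. I expect (ii) to be the main conceptual step, with the concavity-plus-$v_j(0)=0$ argument for $\bar v_j'\le 0$ as its prerequisite.
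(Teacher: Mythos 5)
Your proposal is correct and follows essentially the same route as the paper's proof: expand $u_i(q)+\nabla_i u_i(q)\cdot(\hat q_i-q_i)$ under the proportional rule, swap the summation order so the $\bar v_j(Q_j)$ terms collapse to $\hat Q_j\bar v_j(Q_j)$, bound the cross term via Fact~\ref{fact:ineq} using $\bar v_j'(Q_j)\leq 0$ (from concavity and $v_j(0)=0$), and then invoke the per-project hypothesis. The extra care you take with the $Q_j=0$ case and with matching the two differentiability quantifiers is a small refinement the paper leaves implicit, not a different argument.
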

\begin{proof}
 Let $q$ be some strategy profile and  $\hat{q}$ some other outcome. From the definition of the
proportional sharing scheme: $$u_i(q) = \sum_{j\in M_i}q_i^j \frac{v_j(Q_j)}{Q_j} = \sum_{j\in M_i}
q_i^j\bar{v}_j(Q_j)$$

$$\nabla_i
u_i(q)\cdot (\hat{q}_i-q_i) = \sum_{j\in M_i} \frac{\partial
(q_i^j\bar{v}_j(Q_j))}{\partial q_i^j}(\hat{q}_i^j-q_i^j) = 
\sum_{j\in M_i} (\bar{v}_j(Q_j) + q_i^j \bar{v}'_j(Q_j))(\hat{q}_i^j-q_i^j)$$
Therefore, we have 
\begin{equation*}
 \begin{split}
\sum_{i\in N} u_i(q)+\nabla_i u_i(x)\cdot
(\hat{q}_i-q_i) 
 =~& \sum_{i\in N} \sum_{j\in M_i}q_i^j\bar{v}_j(Q_j)+(\bar{v}_j(Q_j) + q_i^j \bar{v}'_j(Q_j))(\hat{q}_i^j-q_i^j) \\
 =~& \sum_{i\in N} \sum_{j\in M_i}\hat{q}_i^j\bar{v}_j(Q_j)+
\bar{v}'_j(Q_j)q_i^j(\hat{q}_i^j-q_i^j) \\
 =~& \sum_{j\in M} \left( \hat{Q}_j\bar{v}_j(Q_j) + \bar{v}'_j(Q_j) \sum_i [
q_i^j(\hat{q}_i^j-q_i^j) ] \right) \\
 \geq~& \sum_{j\in M} \left( \hat{Q}_j\bar{v}_j(Q_j)+ \bar{v}'_j(Q_j) k(Q_j,\hat{Q}_j)\right) 
 \end{split}
\end{equation*}
where the last inequality follows from Fact \ref{fact:ineq} and the fact that
$\bar{v}_j(Q_j)$ is a non-increasing function (by concavity of $v_j(Q_j)$ and the fact that $v_j(0)=0$), hence $\bar{v}'_j(Q_j)\leq 0$. Combining the last above inequality 
with the assumption of the theorem, we establish the assertion of the theorem.
\qed\end{proof}

\subsection{Establishing the Optimal Bound}

\begin{proofof}{Theorem \ref{thm:fractional-exponents}}
 Utilizing Lemma \ref{lem:any-concave}, we shall compute the value $\hbox{PoA}(\alpha)$ that minimizes $\frac{1+\mu}{\lambda}$ over $\mu,\lambda \geq 0$ subject to
\begin{equation}
y \hat{v}(x) + \hat{v}'(x)k(x,y) \geq \lambda y \hat{v}(y) - \mu x \hat{v}(x) \hbox{ for every } x,y\geq 0
\label{equ:poacond}
\end{equation}
where $\hat{v}(x) = v(x)/x$, $v(x) = w x^{\alpha}$, $w > 0$ and $0< \alpha \leq 1$. Without loss of generality, we can assume $w = 1$. Recall that $k(x,y) = \frac{1}{4}y^2$, for $y/x\leq 2$, and $k(x,y) = x(y-x)$, otherwise. Using this and defining $z = y/x$, it is easily showed that condition (\ref{equ:poacond}) is equivalent to $h_{\lambda,\mu}(z) \geq 0$, for every $z\geq 0$, where
$$
h_{\lambda,\mu}(z) = \left\{
\begin{array}{ll}
z - \frac{1-\alpha}{4}z^2 - \lambda z^\alpha + \mu, & z\leq 2\\
\alpha z - \lambda z^\alpha + \mu, & z > 2. 
\end{array}
\right .
$$
We proceed with considering two different cases.

{\bf Case 1}: $z \leq 2$. In this case, we note that there exist $z_1^0$ and $z_2^0$ such that $z_2^0 > z_1^0 > 0$ and $h_{\lambda,\mu}(z)$ is decreasing on $[0,z_1^0)\cup (z_2^0,\infty)$ and increasing on $(z_1^0,z_2^0)$. Now, note that $h_{\lambda,\mu}'(2) \geq 0$ is equivalent to $\lambda \leq 2^{1-\alpha}$. Hence, if $\lambda \leq 2^{1-\alpha}$, then $\inf_{z\in [0,2]}h_{\lambda,\mu}(z) = h_{\lambda,\mu}(\xi(\lambda))$ where $\xi(\lambda)$ is the smallest positive value $z$  such that $h_{\lambda,\mu}'(z) = 0$. On the other hand, if $\lambda > 2^{1-\alpha}$, we claim that $h_{\lambda,\mu}(z)$ is decreasing on $[0,2]$, and hence $\inf_{z\in [0,2]}h_{\lambda,\mu}(z) = h_{\lambda,\mu}(2) = 1 + \mu + \alpha - 2^{\alpha}\lambda$.

We showed that in the present case, condition (\ref{equ:poacond}) is equivalent to: if $\lambda \leq 2^{1-\alpha}$, the condition is
$$
\mu \geq \xi(\lambda)^\alpha + \frac{1-\alpha}{4}\xi(\lambda)^2 - \xi(\lambda)
$$
otherwise, if $\lambda > 2^{1-\alpha}$, then the condition is
$$
\mu \geq 2^\alpha \lambda - 1 - \alpha.
$$

{\bf Case 2}: $z > 2$. In this case, we note that $h_{\lambda,\mu}(z)$ is a concave function with unique minimum value over positive values at $z = \lambda^{\frac{1}{1-\alpha}}$. Therefore, if $\lambda \leq 2^{1-\alpha}$, then $\inf_{z > 2} h_{\lambda,\mu}(z) = h_{\lambda,\mu}(2) = 1 + \mu + \alpha - 2^\alpha \lambda$, and otherwise, $\inf_{z > 2}h_{\lambda,\mu}(z) = h_{\lambda,\mu}(\lambda^{\frac{1}{1-\alpha}}) = 1 + \mu - \alpha - (1-\alpha)\lambda^{\frac{1}{1-\alpha}}$. Therefore, we have that in the present case, condition (\ref{equ:poacond}) is equivalent to: if $\lambda \leq 2^{1-\alpha}$, then the condition is
$$
\mu \geq 2^\alpha \lambda - 1 - \alpha
$$
otherwise, if $\lambda > 2^{1-\alpha}$, the condition is
$$
\mu \geq (1-\alpha)\lambda^{\frac{1}{1-\alpha}}- 1 + \alpha. 
$$

Now note that 
$$
\hbox{PoA}(\alpha) = \hbox{PoA}_1(\alpha) \wedge \hbox{PoA}_2(\alpha)
$$
where
\begin{eqnarray*}
\hbox{PoA}_1(\alpha) &=& \inf_{\lambda \leq 2^{1-\alpha}} \max\left\{\frac{1 + \xi(\lambda)^\alpha + \frac{1-\alpha}{4}\xi(\lambda)^2 - \xi(\lambda)}{\lambda},\frac{2^\alpha \lambda - \alpha}{\lambda}\right\}\\
\hbox{PoA}_2(\alpha) &=& \inf_{\lambda > 2^{1-\alpha}}\max\left\{\frac{2^\alpha\lambda - \alpha}{\lambda},\frac{(1-\alpha)\lambda^{\frac{1}{1-\alpha}} + \alpha}{\lambda}\right\}.
\end{eqnarray*}

For $\hbox{PoA}_2(\alpha)$, the minimum over all positive values of $\lambda$ is at the smallest value of $\lambda$ at which the two functions under the maximum operator intersect and this is at $\lambda = 2^{1-\alpha}$. Therefore, 
$$
\hbox{PoA}_2(\alpha) = \frac{2^\alpha\lambda - \alpha}{\lambda} |_{\lambda = 2^{1-\alpha}} = \frac{2-\alpha}{2^{1-\alpha}}.
$$  
It remains only to show that $\hbox{PoA}_1(\alpha) \geq \hbox{PoA}_2(\alpha)$ and thus $\hbox{PoA}(\alpha) = \hbox{PoA}_2(\alpha)$. It is convenient to use an upper bound for the first term that appears under the maximum operator in the definition of $\hbox{PoA}_1(\alpha)$. To this end, we go back to our analysis of Case~1 and note that $h_{\lambda,\mu}(z) \geq z - \lambda z^\alpha + \mu + \alpha - 1$, for every $0\leq z\leq 2$. Requiring that the right-hand side is greater or equal zero for every $z\in [0,2]$ is a sufficient condition for $h_{\lambda,\mu}(z)\geq 0$ to hold for every $z \in [0,2]$ and it yields
$$
\mu \geq (1-\alpha)[\alpha^{\frac{\alpha}{1-\alpha}}\lambda^{\frac{1}{1-\alpha}} + 1].
$$
We thus have
$$
\hbox{PoA}_1(\alpha) \leq \inf_{\lambda\leq 2^{1-\alpha}}\max\left\{\frac{1+(1-\alpha)[\alpha^{\frac{\alpha}{1-\alpha}}\lambda^{\frac{1}{1-\alpha}} + 1]}{\lambda},\frac{2^\alpha \lambda - \alpha}{\lambda}\right\}.
$$
Now, it is easy to check that the first term under the maximum operator is greater or equal than the second term for every $\lambda \leq 2^{1-\alpha}$, hence
$$
\hbox{PoA}_1(\alpha) \leq \inf_{\lambda\leq 2^{1-\alpha}} \frac{2-\alpha + (1-\alpha)\alpha^{\frac{\alpha}{1-\alpha}}\lambda^{\frac{1}{1-\alpha}}}{\lambda}.
$$
It can be readily checked that the right-hand side is non-increasing and hence the infimum is achieved at $\lambda = 2^{1-\alpha}$ with the value
$$
\frac{2-\alpha + 2(1-\alpha)\alpha^{\frac{\alpha}{1-\alpha}}}{2^{1-\alpha}}
$$
which indeed is greater than or equal to $(2-\alpha)/2^{1-\alpha} = \hbox{PoA}_2(\alpha)$.
\end{proofof}

\section{Proof of Theorem \ref{THM:PRODUCTION-COSTS}}

\begin{proofof}{Theorem \ref{thm:production-costs}}
We show here the second part of the theorem. We will prove that if $\tilde{q}(t)$ is the social welfare
maximizing outcome for type profile $t$ then:
\begin{equation}
(1+\mu)\E_t[C(\tilde{q}(t))]\leq \E_{t}[V(\tilde{q}(t))]
\end{equation}
By the first order conditions, the social welfare maximizing outcome satisfies the constraint that if
$\tilde{q}_i^j(t)>0$ then:
\begin{equation*}
\frac{\partial c_i(X_i(\tilde{q}_i(t_i))}{\partial q_i^j}=\frac{\partial v_j(\tilde{Q}_j(t))}{\partial q_i^j}\implies 
c_i'(X_i(\tilde{q}_i(t_i)) = \frac{1}{(x_i^j(\tilde{q}_i^j(t)))'}(v_j(\tilde{Q}_j))'
\end{equation*}
Thus we observe that all projects in which a player puts positive effort
the right hand side is identical. Using the above property and the convexity of $x_i^j(q_i^j)$ we obtain:
\begin{align*}
X_i(\tilde{q}_i(t_i))\cdot c_i'(X_i(\tilde{q}_i(t_i))=~&\sum_{j\in M_i}x_i^{j}(\tilde{q}_i^j)\cdot c_i'(X_i(\tilde{q}_i(t_i))=
\sum_{j\in M_i} \frac{x_i^{j}(\tilde{q}_i^j)}{(x_i^j(\tilde{q}_i^j(t)))'}(v_j(\tilde{Q}_j))'\\
\leq~& \sum_{j\in M_i}q_i^{j}\cdot (v^{j}(\tilde{Q}_{j}))'
\end{align*}
Summing over all players and using the lower bound on the elasticity of the cost functions we obtain:
\begin{align*}
(1+\mu)C(\tilde{q}(t))\leq~& \sum_{i\in N} X_i(\tilde{q}_i(t_i))\cdot c_i'(X_i(\tilde{q}_i(t_i))=\sum_{i\in N}\sum_{j\in M_i}\tilde{q}_i^j (v^{j}(\tilde{Q}_{j}))'\\
=~&\sum_{j\in M} \tilde{Q}_j (v_j(\tilde{Q}_j))'\leq \sum_{j\in M} v_j(\tilde{Q}_j) = V(\tilde{q})
\end{align*}
Now using Theorem \ref{thm:general_effort_costs} we obtain that for any Nash Equilibrium:
\begin{align*}
2\E_{t}[V(q(t))]\geq~& \E_{t}[SW^t(q(t))+V(q(t))]\geq \E_{t}[SW^t(\tilde{q})]=\E_{t}[V(\tilde{q}(t))-C(\tilde{q}(t))\\
\geq~& \frac{\mu}{\mu+1}\E_{t}[V(\tilde{q}(t))]
\end{align*}
\end{proofof}

\section{Proof of Proposition \ref{PROP:LINEAR}}

\begin{proofof}{Proposition \ref{PROP:LINEAR}} 
Social welfare is given by $\mathrm{SW}(x) = v(X) - \sum_{i=1}^n c(x_i)$ and socially optimal allocation is such that $x_i = X/n$ for every player $i$, and 
\begin{equation}
v'(X) - c'(X/n) = 0.
\label{equ:foc1}
\end{equation}
Due to the symmetry, we use the notation $\mathrm{SW}(X) = v(X) - nc(X/n)$.

Furthermore, a Nash equilibrium allocation $x$ is such that for every $i$, $x_i$ maximizes
$$
\frac{x_i}{X}v(X)  - c(x_i).
$$
The first order optimality condition reads as, for every $i$,
$$
\frac{v(X)}{X} + x_i \frac{v'(X)X - v(X)}{X^2} - c'(x_i) = 0.
$$
Therefore, $x_i = X/n$ for every $i$, and
\begin{equation}
v'(X) - c'(X/n) + \left(1-\frac{1}{n}\right)\frac{v(X) - v'(X)X}{X} = 0.
\label{equ:foc2}
\end{equation}

It is not difficult that the Nash equilibrium of the game is optimal allocation for a (virtual) social welfare function defined as follows
$$
\hat{\mathrm{SW}}(X) = \frac{1}{n}v(X) + \left(1-\frac{1}{n}\right)\int_0^X \frac{v(y)}{y}dy - nc(X/n).
$$
As an aside remark, note since $v(X)$ is a concave function, $v(X)/X$ is non-increasing over $X\geq 0$ and hence, $\mathrm{SW}(X) \leq \hat{\mathrm{SW}}(X)$, for $x\geq 0$.

In view of the identity (\ref{equ:foc2}), we have
\begin{eqnarray}
\mathrm{SW}(X) &=& v(X) - nc(X/n)\nonumber\\
&=& v(X) - nc(X/n) -X \left(v'(X) - c'(X/n) + \left(1-\frac{1}{n}\right)\frac{v(X) - v'(X)X}{X}\right)\nonumber\\
&=& \frac{v(X)-v'(X)X}{n} + n\left((X/n)c'(X/n)-c(X/n)\right).\label{equ:sw}
\end{eqnarray}

We observe that
$
(X/n)c'(X/n)-c(X/n) = 0
$
if and only if $c(X)$ is a linear function and in this case at Nash equilibrium, it holds
$$
\mathrm{SW}(X) = \frac{v(X)-v'(X)X}{n}.
$$

From (\ref{equ:foc1}) we have that optimum social welfare is a constant, independent of $n$. From (\ref{equ:foc2}) and the fact $X\leq t$, we have that in Nash equilibrium
$$
\mathrm{SW}(X) \leq \frac{\max_{X\in [0,t]}\{v(X) - v'(X)X\}}{n} = \frac{v(t) - v'(t)t}{n} = O(1/n).
$$
The result follows.
\end{proofof}

\begin{remark} We observe that socially optimal and Nash equilibrium allocations are determined by (\ref{equ:foc1}) and (\ref{equ:foc2}), respectively, where for asymptotically large $n$, the behaviors of the functions $v'(x)$ for large $x$ and $c'(x)$ for small $x$ play a key role.
\end{remark}

\section{Importance of Monotonicity} 

Throughout the paper we assumed that the value produced is monotone in the quality of the submissions. While being a natural assumption for most of our applications
one can think of collaborative settings where more effort is not always better. What can we say about such non-monotone situations? 

We show that without the monotonicity assumption then the marginal contribution condition is not sufficient to guarantee a constant price of anarchy for a generic class of examples and for the simple proportional sharing scheme. To see this consider the case of a single project with value function $v(Q)$ (where
$Q$ is the total effort) that is differentiable, concave, $v(0) = 0$, and is single peaked, i.e. the function is increasing for $0\leq Q < Q^*$ and decreasing for $Q > Q^*$, for some $Q^* > 0$. Without loss of generality, let us assume that $v(Q^*) = 1$ and $v(1) = 0$. We assume that $|v'(1)| < \infty$. In this case, the maximum social welfare is $V(Q^*) = 1$. 

Suppose that each player has a budget of $1$ and that effort is equal to submitted quality. The payoff of player $i$ is given by
$$
u_i(q) = \frac{q_i}{Q}v(Q)
$$
The game has a unique Nash equilibrium at which $\frac{\partial}{\partial q_i} u_i(q) = 0$. By summing this condition over all players we get that
the total quality at equilibrium must satisfy:
$$
\left(1-\frac{1}{n}\right)v(Q) + \frac{1}{n}v'(Q)Q = 0.
$$
Let $\tilde{Q}$ be the solution to the above equation. It is easy note from the last identity that $Q^* < \tilde{Q}\leq 1$. By concavity, we have $v(\tilde{Q}) \geq - v'(\tilde{Q})(1-\tilde{Q})$, which combined with the Nash equilibrium condition yields
$
\tilde{Q} \geq 1 - \frac{1}{n}
$.
Therefore, the price of anarchy is
$$
\frac{V(Q^*)}{V(\tilde{Q})} \geq \frac{1}{v(1-\frac{1}{n})} \geq \frac{1}{-v'(1)} n = \Theta(n).
$$

\end{appendix}

\end{document}